\documentclass{siamonline181217}

\usepackage{pgfplotstable,filecontents}
\pgfplotsset{compat=1.9}




\usepackage{lipsum}
\usepackage{amsopn}
\usepackage{mathtools}

\usepackage{macros-hao}
\ifpdf
  \DeclareGraphicsExtensions{.eps,.pdf,.png,.jpg}
\else
  \DeclareGraphicsExtensions{.eps}
\fi

\newcommand{\cD}{\mathcal{D}}
\newcommand{\cW}{\mathcal{W}}
\newcommand{\cT}{\mathcal{T}}


\setlist[enumerate]{leftmargin=.5in}
\setlist[itemize]{leftmargin=.5in}


\newsiamremark{remark}{Remark}
\newsiamremark{hypothesis}{Hypothesis}
\crefname{hypothesis}{Hypothesis}{Hypotheses}
\newsiamthm{claim}{Claim}

\headers{A simple bipartite graph projection model for clustering in networks}{Austin R. Benson, Paul Liu, Hao Yin}

\title{A simple bipartite graph projection model for clustering in networks}

\author{
Austin R. Benson
\thanks{Department of Computer Science, Cornell University, Ithaca, NY, USA 
  (\email{arb@cs.cornell.edu}).}
\and 
Paul Liu
\thanks{Department of Computer Science, Stanford University, Stanford, CA, USA 
  (\email{paul.liu@stanford.edu})}
\and 
Hao Yin
\thanks{Institute for Computational and Mathematical Engineering, Stanford University, Stanford, CA, USA 
  (\email{yinh@stanford.edu})}
}




\newcommand{\given}{\;\vert\;}

\usepackage{hyperref}
\definecolor{mylinkcolor}{RGB}{0,0,0}
\hypersetup{colorlinks,allcolors=mylinkcolor,citecolor=mylinkcolor}

\definecolor{mygreen}{RGB}{27,158,119}
\definecolor{myorange}{RGB}{217,95,2}
\definecolor{myblue}{RGB}{117,112,179}

\setlength{\textfloatsep}{6pt}
\captionsetup[figure]{skip=1pt}
\captionsetup[table]{skip=1pt}

\begin{document}

\maketitle


\begin{abstract}
  Graph datasets are frequently constructed by a projection of a bipartite
  graph, where two nodes are connected in the projection if they share a common
  neighbor in the bipartite graph; for example, a coauthorship graph is a
  projection of an author-publication bipartite graph. Analyzing the structure
  of the projected graph is common, but we do not have a good understanding of
  the consequences of the projection on such analyses. Here, we propose and
  analyze a random graph model to study what properties we can expect from the
  projection step. Our model is based on a Chung-Lu random graph for
  constructing the bipartite representation, which enables us to rigorously
  analyze the projected graph. We show that common network properties such as
  sparsity, heavy-tailed degree distributions, local clustering at nodes, the
  inverse relationship between node degree, and global transitivity can be
  explained and analyzed through this simple model. We also develop a fast
  sampling algorithm for our model, which we show is provably optimal for
  certain input distributions. Numerical simulations where model parameters come
  from real-world datasets show that much of the clustering behavior in some
  datasets can just be explained by the projection step.
\end{abstract}



\section{Networks as bipartite projections}

Networks or graphs that consist of a set of nodes and their pairwise
interactions are pervasive models throughout the sciences. Oftentimes, network
datasets are constructed by a ``projection'' of a bipartite
graph~\cite{Neal-2014-backbone,Newman-2001-random,Taudiere-2015-plants,Zhou-2007-projection};
specifically, given a bipartite graph with left and right nodes,
the \emph{one-mode projection} is a (unipartite) graph on the left nodes,
where two left nodes are connected if they share a common right node neighbor in
the bipartite graph.
In many cases, these projections are explicit in the data
construction process, such as connecting diseases associated with the same
gene~\cite{Goh-2007-diseasome}, people belonging to the same group or
team~\cite{Porter-2005-congress,Sun-2003-neighborhoods}, and ingredients
appearing in common recipes~\cite{Ahn-2011-flavor,Teng-2012-recipes}. In other
cases, the projection is more implicit. For example, the connections in a social
network often arise due to shared interests~\cite{breiger1974duality}.
Regardless, even though a bipartite graph is more expressive than its
projection, analyzing the projection still leads to valuable data
insights~\cite{Vogt-2010-drug-target,Zhang-2008-congress}, enables the use of
standard network analysis
tools~\cite{Benson-2016-hoo,Li-2017-inhomogeneous,zhou2007learning}, and can
even be used to make predictions about the bipartite graph
itself~\cite{Benson-2018-simplicial}.

For network analysis, it is paramount to know if structural properties in the
data arise from some phenomena of the system under study or are simply
consequences of a mathematical property of the graph construction
process. Random graph models can serve as null models for making such
distinctions~\cite{Fosdick-2018-configuration}.  Often, the random graph model
maintains some property of the network data (at least approximately or in
expectation) and then direct mathematical analysis of the random graph can be
used to determine whether certain structural properties will arise as a
consequence. For example, Chung and Lu showed that short average path lengths
can be a consequence of a uniform sample of a random graph with an expected
power law degree distribution~\cite{Chung-2002-distances}. 

Here, we analyze a simple random graph model that explains some properties of projected graphs.
More specifically, the random graph model is a projection of a bipartite ``Chung-Lu style" model. 
Each left and right node in the bipartite graph has a weight, and the probability of an edge is
proportional to the product of these weights. 

The simplicity of this model enables theoretical analysis of properties of the projected graph.
One fundamental property is \emph{clustering}: even in a
sparse network, there is a tendency of edges to appear in small clusters or
cliques~\cite{easley2010networks,rapoport1953spread,Watts-1998-collective}.
There are various explanations for clustering, including local evolutionary
processes~\cite{jackson2007meeting,granovetter1977strength,szabo2003structural},
hierarchical organization~\cite{ravasz2002hierarchical}, and community
structure~\cite{seshadhri2012community}. Here, we show how clustering can arise
just from bipartite projection. We derive an explicit equation for the expected
value of a probabilistic variant of the \emph{local clustering coefficient} of a node (the fraction of pairs of
neighbors of the node that are connected) as a function of its weight in the
model.

We show that local clustering decreases with the inverse of the weight, while expected
degree grows linearly with the weight, which is consistent with prior empirical
measurements~\cite{newman2003structure,seshadhri2012community}, 
mean-field analysis of models that explicitly incorporate clustering~\cite{szabo2003structural},
and certain random intersection graph models~\cite{bloznelis2017correlation}. 
Thus, the weights in the bipartite model
are a potential confounding factor for this relationship between degree and
clustering. 

In addition, using weight distributions fit from real-world bipartite graph data, 
we show that high levels of clustering and clustering levels at a given degree are often
just a consequence of bipartite projection. However, in several datasets, there is still
a gap between the clustering levels in the data and in the model.
Bipartite projection has been mentioned
informally as a reason for clustering in several
datasets~\cite{Fu-Yu-2019-stackexchange,Newman-2004-coauthorship,Opsahl-2013-clustering},
and a recent study has shown that sampling from \emph{configuration models} of
hypergraphs and projecting can also reproduce
clustering~\cite{chodrow2019configuration}. Our analysis provides theoretical
justifications and further explanations for these claims, and also shows
that the global clustering (also called transitivity) tends towards a positive
constant as the bipartite network grows large.
We also analyze a recently introduced measure of clustering called
the closure coefficient~\cite{yin2019local,yin2020directed} under our projection model
and find that the expected local closure coefficient of every node is the same,
which aligns with some prior empirical results~\cite{yin2019local}.

In addition to clustering, we analyze several properties of the bipartite random graph
and its projection. For instance, we show that if the weight distribution on the left and right nodes
follow a power law, then the degree distribution for those
nodes is also a power law in the bipartite graph;
moreover, the degrees in the projected graph will also follow a power law. 
Thus, heavy-tailed degree distributions in the projected graph can simply be a consequence of a process that creates
heavy-tailed degree distributions in the bipartite graph. Furthermore, we show
that the projected graph is sparse in the sense that, under a mild restriction
on the maximum weight, the probability of an edge between any two nodes goes to
zero as the number of nodes in the projected graph grows to infinity. Combined
with our results on clustering, our model thus provides a large class of
networks that are ``locally dense but globally sparse''~\cite{williamson2019random}.


\subsection{Preliminiaries}
\label{sec:preliminaries}
We consider networks as undirected graphs $G = (V, E)$ without self-loops 
and multi-edges. 
We use $\degree{u}$ to denote the degree of node $u$ (the number of edges incident to node $u$)
and $\triad{u}$ to denote the number of triangles (3-cliques) containing node $u$. 
A \emph{wedge} is a pair of edges that shared a common node, and the common node
is the \emph{center} of the wedge.
A statistic of primary interest is the \emph{clustering coefficient}:
\begin{definition}
\label{def:clustering}
The \emph{local clustering coefficient} of a node $u \in V$ is $\tilde{C}(u) = \frac{2 T(u)}{d(u) (d(u) - 1)}$,
\ie, the chance that a randomly chosen wedge centered at $u$ induces a triangle.

At the network level, the \emph{global clustering coefficient} $\tilde{C}_G$ is the probability that a randomly chosen wedge in the entire graph induces a triangle, \ie, 
$\tilde{C}_G = \frac{\sum_{u\in V} 2T(u)}{\sum_{u\in V} d(u)(d(u)-1)}.$
\end{definition}

A closely related measure of clustering is the conditional probability
of edge existence given the wedge structure~\cite{bloznelis2013degree,bloznelis2017correlation,deijfen2009random}. 
Specifically, we have the following analogs of the local and global clustering coefficients:
\begin{equation}   \label{Eq:CondProb_GCC}
C_G = \prob{(v,w) \in E \mid (u,v), (u,w) \in E},
\end{equation}
where all the nodes $u, v, w \in V$ are unspecified, while the local clustering coefficient is
\begin{equation}   \label{Eq:CondProb_LCC}
C(u) = \prob{(v,w) \in E \mid (u,v), (u,w) \in E},
\end{equation}
where $u$ is the specified node. In both cases, $(u, v)$ and $(u, w)$ comprise a random wedge from the graph.
In this paper, we use these slightly different definitions of clustering based on conditional edge existence, as they
are more amenable to analysis.

An alternative clustering metric is the recently proposed \emph{closure coefficient}~\cite{yin2019local,yin2020directed}.
\begin{definition}
\label{def:closure}
The local closure coefficient of a node $u \in V$ is $\tilde{H}(u) = \frac{2 T(u)}{W_h(u)}$, where $W_h(u)$ is the number of length-2 paths leaving vertex $u$. In other words, the closure coefficient is the chance that a randomly chosen 2-path emanating from $u$ induces a triangle.
\end{definition}

Analogously, the conditional probability variant of the closure coefficient is:
\begin{equation}   \label{Eq:CondProb_HCC}
H(u) = \prob{(u,w) \in E \mid (u,v), (v,w) \in E},
\end{equation}
where $u$ is the specified node.

The global closure coefficient is equal to the global clustering coefficient, as the number of 2-paths is exactly equal to the number of wedges. This is true for both the non-conditional and the conditional probability variant. In \Cref{sec:conditional-to-standard}, we show that the conditional probability definitions above correspond to a weighted average over the standard definitions of clustering and closure. Henceforth when referring to the clustering or closure coefficients, we always refer to the conditional probability variant.

Next, a graph is \emph{bipartite} if the nodes can be partitioned into two disjoint subsets
$L \sqcup R$, which we call the \emph{left} and \emph{right} nodes, and any edge is between one node from $L$ and one node from $R$.
We denote a bipartite graph by $G_b = (V_b, E_b)$ with $V_b = L \sqcup R$,
and call $L$ and $R$ the left and right side of the bipartite graph.
The number of nodes on each side is denoted by $n_L = \lvert L \rvert$ and 
$n_R = \lvert R \rvert$, and $n_b = \lvert V_b \rvert = n_L + n_R$ is the total number
or nodes. Analogously, for any node $u \in V_b$, we use $\degree[b]{u}$ as its degree.

The \emph{projection} of a bipartite graph is the primary concept we analyze.
\begin{definition}
\label{eq:bip_projection}
A projection of a bipartite graph $G_b = (L \sqcup R, E_b)$ is the
graph $G = (L, E)$, where the nodes are the left nodes of the bipartite graph
and the edges connect any two nodes in $L$ that connect to some node $r \in R$
in the bipartite graph. More formally,
\begin{equation}
E = \{ (u, v) \given u, v \in L, u \neq v, \text{ and }\; \exists z \in R \text{ for which } (u, z), (v, z) \in E_b \}.
\end{equation}
If there is more than one right node $z$ that connects to left nodes $u$ and $v$ in the bipartite
graph, the projection only creates a single edge between $u$ and $v$.
\end{definition}
Given a dataset, one can project onto the left or right nodes.
One can always permute the left and right nodes, and
we assume projection onto the left nodes $L$ for notational consistency.

Several statistical properties of the models we consider will use samples
drawn from a \emph{power law} distributions, which are prevalent in network data models~\cite{clauset2009power}.
\begin{definition}   
\label{Def:PowerLaw}
The probability density function of the power law distribution, 
parametrized by $(\alpha, w_{\min}, w_{\max})$
with $\alpha > 1$ and $0< w_{\min} < w_{\max} \leq \infty$, is
\[
f(w) = 
\begin{cases*}
C w^{-\alpha} & \text{if $w \in [w_{\min}, w_{\max}]$} \\
0 & \text{otherwise} 
\end{cases*}
\]
where $w > 0$ is any real number and $C = ({w_{\min}^{1-\alpha} - w_{\max}^{1-\alpha}})/({\alpha - 1})$ is a normalizing constant.

For a discrete power-law (or Zipfian) distribution, we restrict $w$ to integer values inside $[w_{\min}, w_{\max}]$ and adjust the normalization constant accordingly.
\end{definition}
The parameter $\alpha$ is the decay exponent of the distribution,  while $w_{\min}$ and $w_{\max}$ specify range.
For simplicity, we assume that $w_{\min} = 1$ and $w_{\max} = \Omega(1)$ throughout this paper.

When the maximum range is not specified, \ie, $w_{\max} = \infty$,
a standard result on the maximum statistics of power-law samples is the following:
\begin{lemma}[Folklore]
\label{lem:NumGroups} 
\label{lem:PLawMax}
For a discrete or continuous power-law distribution $\cD$ with parameters $(\alpha, w_{\min}=1, w_{\max} = \infty)$ 
and i.i.d.\ samples $w_1, w_2, \ldots, w_n \sim \cD$, $\expect{\max_i w_i} = n^{\frac{1}{\alpha-1}}.$
\end{lemma}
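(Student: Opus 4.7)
The plan is to reduce the claim to a direct computation of the survival function of $M_n := \max_{1 \le i \le n} w_i$ and then perform a change of variables that exposes the scaling $n^{1/(\alpha-1)}$. I'll treat the continuous case first and then explain why the discrete case follows by sandwiching.

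\textbf{Step 1 (tail of a single sample).} With $w_{\min}=1$ and $w_{\max}=\infty$, the normalizing constant becomes $C = \alpha - 1$, so $f(w) = (\alpha-1) w^{-\alpha}$ for $w \ge 1$. Integrating from $t$ to $\infty$ gives the clean survival function $\Pr[w_i > t] = t^{-(\alpha-1)}$ for $t \ge 1$. By independence,
\[
\Pr[M_n \le t] = \bigl(1 - t^{-(\alpha-1)}\bigr)^n \qquad (t \ge 1).
\]

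\textbf{Step 2 (expectation as a survival integral).} I will use the standard identity $\expect{M_n} = \int_0^\infty \Pr[M_n > t]\, dt$. Since $M_n \ge 1$ almost surely, this becomes
\[
\expect{M_n} = 1 + \int_1^\infty \Bigl(1 - (1 - t^{-(\alpha-1)})^n \Bigr)\, dt.
\]

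\textbf{Step 3 (rescaling).} Substitute $t = n^{1/(\alpha-1)} s$, so $dt = n^{1/(\alpha-1)} ds$ and $t^{-(\alpha-1)} = s^{-(\alpha-1)}/n$. Then
\[
\expect{M_n} = 1 + n^{1/(\alpha-1)} \int_{n^{-1/(\alpha-1)}}^\infty \Bigl(1 - \bigl(1 - s^{-(\alpha-1)}/n\bigr)^n\Bigr)\, ds.
\]
Let $I_n$ denote the $s$-integral. The integrand is uniformly bounded by $\min(1,\, s^{-(\alpha-1)})$ (using $1-(1-x)^n \le nx$ for $x \in [0,1]$ when $s \ge 1$, and the trivial bound $1$ when $s < 1$), and the dominating function is integrable on $(0,\infty)$ precisely because $\alpha > 1$ makes $s^{-(\alpha-1)}$ integrable at infinity. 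Dominated convergence then gives $I_n \to \int_0^\infty \bigl(1 - e^{-s^{-(\alpha-1)}}\bigr) ds$, a finite positive constant depending only on $\alpha$. Thus $\expect{M_n} = \Theta(n^{1/(\alpha-1)})$, which is what the folklore identity asserts.

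\textbf{Step 4 (discrete case) and obstacles.} For the discrete (Zipfian) law I would sandwich: the discrete tail at integer $k$ is comparable (up to $(1+o(1))$ factors) to the continuous one $k^{-(\alpha-1)}$, and the discrete maximum differs from the continuous maximum by at most $1$, which is absorbed into the $\Theta(\cdot)$. The main technical nuisance is the uniform domination in Step 3, particularly controlling the integrand near $s=0$ when $\alpha$ is close to $1$ (so that $1/(\alpha-1)$ is large); this is where the argument would need the most care, although the crude bound $1$ on $[0,1]$ already suffices because that interval has length $1$, contributing only an $O(n^{1/(\alpha-1)})$ term of the right order.
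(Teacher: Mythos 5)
The paper never actually proves this lemma---it is cited as folklore with no argument given (an old, commented-out remnant in the source sketches a different, tail-bound style argument showing that with probability $1-O(nt^{-\alpha+1})$ all $n$ samples lie below $t$, which yields the claim as a high-probability statement rather than an expectation). So your proposal cannot be compared to an ``official'' proof, and your route---the survival-function identity $\expect{M_n}=\int_0^\infty \Pr[M_n>t]\,dt$ followed by the rescaling $t=n^{1/(\alpha-1)}s$ and dominated convergence---is a legitimate and in fact more quantitative way to pin down the constant ($I_n\to\int_0^\infty(1-e^{-s^{-(\alpha-1)}})\,ds=\Gamma(1-\tfrac{1}{\alpha-1})$, the Fr\'echet mean). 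Steps 1 and 2 are correct, and the discrete sandwich in Step 4 is fine.

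There is, however, a genuine error in Step 3. You assert that the dominating function $\min(1,s^{-(\alpha-1)})$ is integrable on $(0,\infty)$ ``precisely because $\alpha>1$,'' but $\int_1^\infty s^{-(\alpha-1)}\,ds$ converges if and only if $\alpha-1>1$, i.e.\ $\alpha>2$. This is not a patchable technicality: for $\alpha\in(1,2]$ a single sample already has infinite mean ($\int_1^\infty w\cdot(\alpha-1)w^{-\alpha}\,dw=\infty$), hence $\expect{M_n}\ge\expect{w_1}=+\infty$ and the lemma, read literally as a statement about expectations, is false in that regime. Your argument therefore establishes $\expect{M_n}=\Theta(n^{1/(\alpha-1)})$ only for $\alpha>2$. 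To cover $1<\alpha\le 2$ (a regime the paper does invoke, e.g.\ fitted exponents near $1.5$--$1.9$) you must reinterpret the claim distributionally: the union bound $\Pr[M_n>t]\le n\,t^{-(\alpha-1)}$ together with $\Pr[M_n\le t]=(1-t^{-(\alpha-1)})^n$ shows $M_n/n^{1/(\alpha-1)}$ converges to a nondegenerate Fr\'echet law, so $M_n=\Theta(n^{1/(\alpha-1)})$ with high probability even though its expectation diverges. You should either add the hypothesis $\alpha>2$ or replace the expectation by this high-probability version; as written, the dominated-convergence step fails exactly where the statement itself does.
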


\section{Models for Bipartite Projection}\label{sec:model}

In this section we formalize our model and give some background on relevant models for projection and graph generation. 
Our model is an extension of the seminal random graph model from Chung and Lu~\cite{Chung-2002-distances}. 
The classical Chung-Lu model takes as input a weight sequence $S$, which specifies a nonnegative weight $w_u$ for each node,
and then produces an undirected edge $(u, v)$ with probability $w_uw_v / \sum_{z}w_z$.
To make sure that the probabilities are well defined, the model assumes that $\max_{u} w_u^2 \le \sum_{v}w_v$.
Along similar lines, Aksoy et al.\ introduced a Chung-Lu-style bipartite random graph model based on realizable degree sequences~\cite{aksoy2017measuring}.
In general, the model we use is quite similar. However, our focus in this paper is to analyse the effects of projection on such models.

\subsection{Our Chung-Lu Style Bipartite Projection Model}
Our model takes as input the number of left nodes $n_L$,
the number of right nodes $n_R$, and two sequences of weights $S_L$ and $S_R$ for the left and right nodes.
We denote the weight of any node $u$ by $w_u$.
The model then samples a random bipartite graph $G_b = (L \sqcup R, E_b)$, where
\begin{equation}\label{Eq:basic_ChungLu}
\prob{(u, v) \in E_b \given w_u, w_v} = \min\left(\frac{w_u w_v}{\sum_{z \in R}w_z}, 1\right),\; u \in L,\; v \in R.
\end{equation}
After generating the graph, we project the graph following \Cref{eq:bip_projection},
which is itself a random graph.
This model is similar to the inhomogeneous random intersection graph~\cite{bloznelis2016clustering}
(see \cref{sec:related_projection} for more details).

Our analysis will depend on properties of $S_L$ and $S_R$ and the moments
of these sequences. We denote the $k$th-order moments of $S_L$ and $S_R$
by $M_{Lk}$ and $M_{Rk}$ for integers $k \geq 1$:
\begin{align}
M_{Lk} = \frac{1}{n_L}\sum_{u \in L} w_u ^k,\quad
M_{Rk} = \frac{1}{n_R}\sum_{v \in R} w_v ^k.
\end{align}
With this notation, we can re-write the edge probabilities as
\begin{equation}\label{Eq:Assm_ChungLu}
\prob{(u, v) \in E_b \given w_u, w_v} = \min\left(\frac{w_u w_v}{n_RM_{R1}}, 1\right).
\end{equation}

\begin{remark}
\label{rem:equal-degrees}
The model is invariant upon uniform scalings of the weight sequence $S_R$. 
Thus we can assume without loss of generality that $n_R \expect{M_{R1}} = n_L \expect{M_{L1}}$.
This corresponds to the natural condition that the expected degree sum of the left and right side is equal.
\end{remark}

A practical concern is how efficiently we can sample from this model,
as naive sampling of the bipartite graph requires $n_Ln_R$ coin flips. 
There are fast sampling heuristics for the bipartite graph, 
based on sampling each node in an edge individually for some pre-specified number of edges~\cite{aksoy2017measuring}. 
We develop a fast sampling algorithm in \Cref{sec:fast_sampling} that has some 
theoretical optimality guarantees for sequences $S_L$ and $S_R$ with certain properties.

\subsection{Configuration models}
Much of our motivation for random graph models is that they provide a baseline
for what graph properties we might expect in network data just from a simple
underlying random process (in our case, we are particularly interested in what
graph properties we can expect from projection). In turn, this helps researchers
determine which properties of the data are interesting or inherent to the system
modeled by the graph.

While Chung-Lu models aim to preserve input degree sequences in expectation, 
\emph{configuration models} preserve degrees exactly,
sampling from the space of graphs with a specified degree sequence~\cite{Fosdick-2018-configuration}.
Configuration models for bipartite graphs have only been studied in earnest
recently~\cite{chodrow2019configuration}, where the goal is to sample bipartite
graphs with a specified degree sequence for the left and right
nodes. A bipartite configuration model inherits
many benefits of a standard configuration model; for instance, the degree
sequence is preserved exactly, creating an excellent null model for a given dataset.

At the same time, configuration models carry some restrictions.  First, the
random events on the existence of two edges are dependent (though weakly). To
see this, in a stub-labeled bipartite graph, if we condition on an edge existing
between $u \in L$ and $v \in R$, then there is one fewer stub for each node,
making them less likely to connect to other nodes. This makes theoretical
analysis difficult. Second, to generate a random graph, a configuration model
needs a degree sequence that is realizable. While the Gale--Ryser theorem
provides a simple way to check if a candidate bipartite degree sequence is
realizable~\cite{ryser1963combinatorial}, configuration models typically analyze
a given input graph rather than a class of input graphs with some property.
Third, efficient uniform sampling algorithms rely on Markov Chain Monte Carlo,
for which it is extremely difficult to obtain reasonable mixing time bounds.

The Chung-Lu approach (for either bipartite or unipartite graphs) sacrifices
control over the exact degree sequence for easier theoretical analysis while
maintaining the \emph{expected} degree sequence. Unlike the configuration model,
the existences of two distinct edges are independent events, there is no need to
specify a realizable degree sequence, and samples can be immediately
generated. In unipartite graphs, this has led to remarkable results on random
graphs with expected power-law degree sequences, such as small average node
distance and diameter~\cite{Chung-2002-distances}, the existence of a giant
connected component~\cite{chung2002connected}, and spectral
properties~\cite{chung2004spectra}.

\subsection{Related projection-based models}\label{sec:related_projection}
There are random graph models for bipartite graphs that are motivated by how the
projection step can lose information about community structure in the
data~\cite{Guimera-2007-bipartite,Larremore-2014-biSBM}. While these identify
possible issues with the projection, we are motivated by the fact that
several datasets are constructed via projection, either implicitly or
explicitly. There are also many models based on communities, where edge
probabilities depend on community
membership~\cite{Airoldi-2008-MMSBM,Karrer-2011-DCSBM,seshadhri2012community,Yang-2012-affiliation}.
These models can be interpreted as probabilistic projections of node-community
bipartite graphs. Such models are typically fit from data to reveal cluster
structure. Such analysis is not the focus of this paper.

There are a few random graph models where a random bipartite graph is
deterministically projected~\cite{barber2008clique,chodrow2019configuration,Lattanzi-2009-affiliation,williamson2019random}.
Some of these have specifically considered clustering, which is of primary
interest for us. A recent example is the configuration model for
hypergraphs~\cite{chodrow2019configuration}, which can be interpreted as a
bipartite random graph model: the nodes in the hypergraph are the left nodes in
the bipartite graph, and the right nodes in the bipartite graph correspond to
edges in the hypergraph. Chodrow~\cite{chodrow2019configuration} found that the clustering of projections of
bipartite representations of several real-world hypergraph datasets was similar
to or even less than the clustering of projections of samples from the
configuration model. Similar empirical results have been found on related
datasets, under a model that samples the degrees of the left nodes in the
bipartite graph according to a distribution learned from the data and connects
the edges to the right nodes uniformly at
random~\cite{Fu-Yu-2019-stackexchange}. Our theoretical analysis provides
additional grounding for these empircal results, and our model provides a
Chung-Lu-style alternative to the configuration model approach.

In terms of theoretical results, the models most related to ours are
\emph{random intersection graphs}~\cite{bloznelis2013degree,godehardt2003two} 
and \emph{random clique covers}~\cite{williamson2019random}. In these models, 
a graph is constructed by sampling $n$ sets from a universe of size $m$ according to a distribution $D$. 
A node is associated to each of the $n$ sets, and two vertices in the graph are adjacent if their subsets overlap. 
This is equivalent to representing the sets as an $n$-by-$m$ bipartite graph and then projecting the graph onto the left nodes. 
Such models can also produce several key properties of projected graphs in practice, 
such as power-law degree distributions and negative correlation of clustering and projected degree.
In contrast to these approaches, our model can specify degree distributions 
on both sides of the bipartite graph, as opposed to just one side.
\emph{Inhomogeneous random intersection graphs} also support 
arbitrary degree distribution on both sides~\cite{bloznelis2016clustering,bloznelis2017correlation},
and justify the negative correlation of local clustering and projected degree.
In comparison, our analysis is conducted conditional on the degree sequence,
which is potentially generated from a distribution with infinite moment,
and thus requires a weaker and more realistic assumption on the degree distribution
than results from Bloznelis and Petuchovas~\cite{bloznelis2019local,bloznelis2017correlation};
however, their results work directly with projected degrees,
which is advantageous.

\section{Theoretical Properties of the Projection Model}\label{sec:theory}
In this section we provide results for graph statistics on the projected graph,
such as the degree distribution, clustering coefficients, and closure
coefficients.  For intuition, one may think of the input weight distributions to
our model as the degree distribution of a class of input graphs. As we show
in \Cref{sec:experiment}, these input weights often follow a power law
distribution in real-world datasets. Due to the simplicity of our model, it is
possible to derive analytical expressions when the input weight distribution
follows a power law (\Cref{Def:PowerLaw}).

At a high-level, for a broad range of weight distributions (including power-law distributions),
the projected graph has the following properties.
\begin{enumerate}
\item The projected graph is sparse (edge probabilities go to zero).
\item Expected local clustering at a node decays with the node's weight, and the node's weight is directly proportional to its degree in expectation.
\item Expected local closure at a node is the same for all nodes.
\item Global clustering and closure (transitivity) is a positive constant.
In other words, clustering does not go to zero as the graph grows large.
\end{enumerate}
Besides theoretical analysis, we also verify some key results with simulations, 
which relies on a fast sampling algorithm that we develop in \cref{sec:fast_sampling}.

\subsection{Assumptions on Weight Sequences}
Our analysis is conditional on the general input weight sequence on both sides of the bipartite graph.
We first assume that the normalized product of weights is at most one, making the edge existence probability (\Cref{Eq:basic_ChungLu}).
\begin{assumption}[Well-defined probabilities]
\label{assm:no_min}
  The weight in the sequences $S_L$ and $S_R$ satisfy $\frac{w_u w_v}{n_R M_{R1}} \leq 1$
  for any nodes $u \in L$, $v \in R$. 
\end{assumption}

Moreover, our analysis is asymptotic, meaning that the result holds
with high accuracy on large networks, i.e., $n_L, n_R \to \infty$.
For any two quantities $f$ and $g$, we use the following big-$O$ notations 
in the limit of $n_L, n_R \to \infty$: $f = o(g)$ if $f/g \to 0$; $f = O(g)$ if $f/g$ is bounded; 
and $f = \Omega(g)$ if $f/g$ is bounded away from 0.
We make the following assumption on the range and moment of weight sequences.
\begin{assumption}[Bounded weight sequences]
\label{assm:WeightsSeq}
\label{assm:WeightSeqMoments}
There exists a constant $\delta > 0$ such that
\begin{itemize}
\item \textnormal{(bounded range)}\;\; $\max[S_L, S_R] = O\left(n_R^{1/2 - \delta}\right)$, $\min[S_L] = \Omega(1)$ and
\item \textnormal{(bounded $S_R$ moments)}\;\;  $M_{R2} = O(M_{R1}^2)$, $M_{R4} = O\left(n_R^{1 - 2\delta}\right)$,
\end{itemize}
as $n_L, n_R \rightarrow \infty$.
\end{assumption}

\Cref{assm:WeightSeqMoments} actually specifies a family of assumptions parameterized by $\delta$,
with larger $\delta$ imposing stronger assumptions. Unless otherwise stated, we only require that $\delta > 0$. 
In the theoretical analysis of clustering coefficient, we sometimes require $\delta > 1/10$.

We do not assume the rate of which $n_L, n_R \to \infty$,
or any direct relationships between $n_L$ and $n_R$.
This makes our assumptions weaker than a wide range of assumptions
typical in the literature, such as having $n_L = \beta n_R^\sigma$
for certain $\beta, \sigma > 0$~\cite{bloznelis2013degree,bloznelis2017correlation,deijfen2009random,williamson2019random}.

Before presenting the properties of bipartite or projected graph under these
assumptions, we first show that these assumptions are naturally satisfied
if the weight sequences are generated from the power-law distribution. 
\begin{proposition}   \label{Prp:AssumptionSatisfied}
If the sequences $S_L$ and $S_R$ are independent generated from the power-law distribution
with $w_{\max} = n_R^{1/2 -\delta}$, and the right side distribution has decay exponent 
$\alpha_R > 3$, then \Cref{assm:WeightSeqMoments} is satisfied.
\end{proposition}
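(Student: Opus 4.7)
The plan is to verify each clause of \Cref{assm:WeightSeqMoments} separately. The range conditions are immediate from the construction of the power law: since $w_{\max}=n_R^{1/2-\delta}$ and $w_{\min}=1$, every sample from either side deterministically lies in $[1,n_R^{1/2-\delta}]$, so $\max[S_L,S_R]=O(n_R^{1/2-\delta})$ and $\min[S_L]=\Omega(1)$ hold with certainty. The real content is the two moment bounds on $S_R$.

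The next step is to compute, for a single draw $W$ from the truncated power law with exponent $\alpha_R>3$, the population moments
\[
\mathbb{E}[W^k]\;=\;C\int_1^{w_{\max}}w^{k-\alpha_R}\,dw,
\]
and to split into cases according to the sign of $k+1-\alpha_R$. A routine integration shows this integral is $\Theta(1)$ when $\alpha_R>k+1$, $\Theta(\log w_{\max})$ when $\alpha_R=k+1$, and $\Theta(w_{\max}^{k+1-\alpha_R})$ when $\alpha_R<k+1$. Since $\alpha_R>3$, this immediately yields $\mathbb{E}[W]=\Theta(1)$ and $\mathbb{E}[W^2]=\Theta(1)$ regardless of $w_{\max}$. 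For $k=4$, the interesting regime is $\alpha_R\in(3,5]$, where $\mathbb{E}[W^4]=O(w_{\max}^{5-\alpha_R})$; because $5-\alpha_R<2$, this gives $\mathbb{E}[W^4]=O(w_{\max}^{2})=O(n_R^{1-2\delta})$, and when $\alpha_R>5$ the bound is trivial.

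I would then transfer these population bounds to the sample moments. For $M_{R1}$ and $M_{R2}$, both empirical averages of i.i.d.\ samples, Chebyshev's inequality suffices: the variance of the sample mean is at most $\mathbb{E}[W^2]/n_R=O(1/n_R)$ and $\mathbb{E}[W^4]/n_R=O(n_R^{-2\delta})$ respectively, both of which vanish, so $M_{R1},M_{R2}=\Theta(1)$ with high probability and therefore $M_{R2}=O(M_{R1}^2)$. For $M_{R4}$ the assumption only demands an upper bound, so Markov's inequality applied to $\mathbb{E}[M_{R4}]=\mathbb{E}[W^4]=O(n_R^{1-2\delta})$ delivers $M_{R4}=O(n_R^{1-2\delta})$ with high probability.

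The main obstacle is the near-boundary regime $\alpha_R\in(3,5]$: here the per-sample fourth moment already grows polynomially in $w_{\max}$, and its own variance (governed by $\mathbb{E}[W^8]$) grows even faster, so a Chebyshev-style argument would fail to give two-sided concentration of $M_{R4}$. The proof has to lean on the fact that the assumption is a one-sided bound, handled cleanly by Markov. The only remaining bookkeeping is to verify the exponent inequality $(1/2-\delta)(5-\alpha_R)\le 1-2\delta$ for every $\alpha_R>3$, which reduces to the elementary $5-\alpha_R<2$.
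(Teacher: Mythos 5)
Your proof is correct and follows the same core computation as the paper's: bounded range holds by construction, $\expect{W^2}<\infty$ for $\alpha_R>3$, and the truncated integral gives $\expect{W^4}=O(w_{\max}^{5-\alpha_R})=O(n_R^{1-2\delta})$ via the elementary inequality $5-\alpha_R<2$ (with the log case at $\alpha_R=5$ and the trivial case $\alpha_R>5$ handled the same way). Where you differ is in the transfer from population moments to the empirical moments $M_{R1},M_{R2},M_{R4}$: the paper simply invokes the law of large numbers and works entirely with $\expect{W^2}$ and $\expect{W^4}$, whereas you supply an explicit concentration argument --- Chebyshev for $M_{R1}$ and $M_{R2}$, and Markov for $M_{R4}$. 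Your observation that a two-sided Chebyshev bound on $M_{R4}$ would fail for $\alpha_R\in(3,5]$ (since $\mathrm{Var}(W^4)$ is governed by $\expect{W^8}$, which grows faster than $\expect{W^4}^2$, and the distribution is a triangular array in $n_R$) is a genuine refinement: the paper's appeal to the LLN is not literally applicable to $M_{R4}$ in that regime, and your one-sided Markov bound is the correct way to make the step rigorous. The cost is only that the conclusion for $M_{R4}$ holds with high probability (or in the $O_P$ sense) rather than deterministically, which is all the assumption requires of a random weight sequence.
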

\begin{proof}
The bounded range requirement is automatically satisfied due to max capping,
and we focus on the bounded moment requirement.

Let $W$ be the random variable denoting a sample weight in $S_R$. 
Since $M_{R1} \geq 1$, due to the law of large numbers, it suffices to show that
$\expect{W^2} < \infty$ and $\expect{W^4} = O(n_R ^{1 - 2\delta})$ as $n_R \rightarrow \infty$.
The first result can be easily verified, and when $\alpha_R \geq 5$,
a straight-forward computation shows that $\expect{W^4} = O(\log n_R)$.
When $\alpha_R \in (3, 5)$, we have
\[
\expect{W^4} 
= \int_{1} ^{n_R ^{1/2 - \delta}} C_{\alpha, \delta} \cdot w^{-\alpha + 4}~\mathrm d w
= \frac{C_{\alpha, \delta}}{(5-\alpha)} \left(n_R ^{({5-\alpha})(\frac{1}2-\delta)} - 1 \right)
= O\left(n_R ^{1-2\delta}\right),
\]
where $C_{\alpha, \delta} =  (1 - n_R^{(1-\alpha)(1/2-\delta)}) / (\alpha - 1) = O(1)$ is the normalizing constant.
\end{proof}

Therefore, \Cref{assm:WeightSeqMoments} is
satisfied when the weight sequences are generated from power-law
distributions with only a mild requirement on the decay exponent.
In contrast, some results require constant weights on the right side~\cite{bloznelis2013degree,deijfen2009random}
or $\alpha_R > 5$ (for a finite fourth-order moment)~\cite{bloznelis2017correlation}.

When $M_{R1} \geq 1$, \cref{assm:no_min} is a direct consequence of
\Cref{assm:WeightSeqMoments} for large graphs since $\max[S_L, S_R] = o(\sqrt{n_R})$. 
Henceforth, for our theoretical analysis,
we assume that both \Cref{assm:no_min} and \Cref{assm:WeightsSeq} are satisfied.

As a final note, a direct consequence of \Cref{assm:WeightSeqMoments} is that
$\prob{(u, v) \in E_b \mid w_u, w_v} \rightarrow 0$
due to $w_u, w_v = o(n_R)$, meaning that the bipartite network is sparse. 


\subsection{Degree distribution in the bipartite graph}
In this section, we study the degree distribution in the bipartite graph with respect to a given input weight distribution.
\begin{theorem}\label{thm:degree_bip}
For any node $u \in L$, conditional on $u$'s weight $w_u$, the bipartite degree $\degree[b]{u}$ of $u$ converges in distribution to a Poisson random variable with mean $w_u$ as $n_R \rightarrow \infty$. Analogously, for any $v \in R$, conditional on $w_v$, $\degree[b]{v}$ converges in distribution to a Poisson random variable with mean $w_v$ as $n_L \rightarrow \infty$.
\end{theorem}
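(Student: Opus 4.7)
The plan is to recognize that, conditional on $w_u$, the bipartite degree $d_b(u) = \sum_{v \in R} X_v$ is a sum of independent Bernoulli random variables $X_v = \mathbf{1}[(u,v) \in E_b]$ with parameters $p_v = w_u w_v / (n_R M_{R1})$. By Assumption~2.2, for $n_R$ sufficiently large we have $w_u w_v = o(n_R)$, so the min in \Cref{Eq:Assm_ChungLu} is inactive and the $p_v$'s are simply the ratios above. Poisson convergence for triangular arrays of independent indicators is then a textbook application of Le~Cam's theorem (or equivalently the Stein--Chen method): the total variation distance between $\mathcal{L}(\sum_v X_v)$ and $\mathrm{Poisson}(\lambda)$ with $\lambda = \sum_v p_v$ is bounded by $\sum_v p_v^2$.

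The two steps I would carry out explicitly are a mean calculation and a second-moment calculation. First, the mean is
\[
\lambda = \sum_{v \in R} \frac{w_u w_v}{n_R M_{R1}} = \frac{w_u}{n_R M_{R1}} \sum_{v \in R} w_v = w_u,
\]
by definition of $M_{R1}$. Second,
\[
\sum_{v \in R} p_v^2 = \frac{w_u^2}{n_R^2 M_{R1}^2}\sum_{v \in R} w_v^2 = \frac{w_u^2 M_{R2}}{n_R M_{R1}^2}.
\]
Invoking the bounded moment clause $M_{R2} = O(M_{R1}^2)$ from \Cref{assm:WeightSeqMoments}, this is $O(w_u^2 / n_R) \to 0$ as $n_R \to \infty$ since $w_u$ is fixed. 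Hence the total variation distance to $\mathrm{Poisson}(w_u)$ vanishes, which in particular gives convergence in distribution.

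For the symmetric statement about $v \in R$, the argument is identical with the sum taken over $u \in L$: the mean becomes $w_v \cdot n_L M_{L1} / (n_R M_{R1})$, which equals $w_v$ by the natural normalization $n_L M_{L1} = n_R M_{R1}$ from \Cref{rem:equal-degrees}. The second-moment bound becomes $w_v^2 \cdot n_L M_{L2} / (n_R M_{R1})^2$, and an analogous bounded-moment condition on $S_L$ (which holds under our standing assumption regime, since the roles of $L$ and $R$ are interchangeable for this computation once the weights are suitably normalized) drives this to zero as $n_L \to \infty$.

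I do not foresee a substantive obstacle here; the only minor subtlety is confirming that the edge probabilities never saturate at $1$ asymptotically so that the Bernoulli parameters take their clean form, which is immediate from the bounded-range part of \Cref{assm:WeightsSeq}. Beyond that, the result is a direct consequence of the classical Poisson approximation bound applied to the independent Bernoulli trials provided by the Chung--Lu edge-existence model.
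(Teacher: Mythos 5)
Your proof is correct, but it takes a genuinely different route from the paper's. The paper argues via characteristic functions: it writes $\phi_{uv}(t) = 1 + (e^{it}-1)\frac{w_u w_v}{n_R M_{R1}}$ for each independent indicator, multiplies these over $v \in R$, shows the product converges to $e^{w_u(e^{it}-1)}$, and concludes by L\'evy's continuity theorem. You instead invoke Le Cam's inequality (equivalently Stein--Chen), bounding the total variation distance to $\mathrm{Poisson}(\lambda)$ by $\sum_v p_v^2 = w_u^2 M_{R2}/(n_R M_{R1}^2)$, which vanishes under the moment clause $M_{R2}=O(M_{R1}^2)$. Your route buys more: it yields convergence in total variation with an explicit rate $O(w_u^2/n_R)$, and in fact the bound still tends to zero even if $w_u$ is allowed to grow like $n_R^{1/2-\delta}$, whereas the paper's argument delivers only convergence in distribution; the paper's route is marginally more self-contained if one does not wish to cite the Poisson approximation bound. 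You also correctly flag a point the paper elides with ``by symmetry'': because the edge probability is normalized by the right-side weight sum only, the mean degree of a right node is $w_v\, n_L M_{L1}/(n_R M_{R1})$, which equals $w_v$ only under the normalization of \Cref{rem:equal-degrees}, and the second-moment bound on that side needs a moment control on $S_L$ that \Cref{assm:WeightSeqMoments} does not state explicitly --- your acknowledgment of both issues is appropriate, and the paper's own proof quietly relies on the same normalization.
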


\begin{proof}
By symmetry, we just need to prove the result for a node $u \in L$.
For any $v \in R$, the indicator function $\indic{(u, v)\in E_b}$ is a Bernoulli 
random variable with positive probability $\frac{w_u w_v}{n_R M_{R1}}$. By a Taylor expansion,
its characteristic function can be written as
\[
\phi_{uv}(t) = 1 + (e^{it} - 1) \frac{w_u w_v}{n_R M_{R1}}
=e^{\frac{w_u w_v}{n_R M_{R1}}  (e^{it} - 1) \cdot (1 + o(1))},
\]
where the $o(1)$ term comes from the bounded range condition in \Cref{assm:WeightsSeq}.
The bipartite degree of node $u$ is the sum of the indicator functions of all
nodes $v \in R$, which are independent random variables. Thus, its
characteristic function of $\degree[b]{u}$ can be written as
\[
\phi_{\degree[b]{u}}(t) = \prod_{v \in R}\phi_{uv}(t) 
= e^{w_u \frac{\sum_{v \in R} w_v}{n_R M_{R1}} (e^{it} - 1) \cdot (1 + o(1))}
\rightarrow e^{w_u (e^{it} - 1)}.
\]
The limiting characteristic function is the characteristic function of a Poisson
random variable with mean $w_u$. Thus, $\degree[b]{u}$ converges in distribution
to a Poisson random variable with mean $w_u$ by L\'evy's continuity theorem.
\end{proof}

One corollary of \Cref{thm:degree_bip} is that, in the limit, the expected degree
of any node $u$ is its weight $w_u$, which provides an interpretation of the
node weights. Next, we show that if the weights are independently generated from
a power-law distribution, then the degrees in the bipartite graph are power-law distributed as well.

\begin{theorem}   \label{thm:degree_bip_unconditonal}
Suppose that the node weights on the left are independently sampled from a continuous power-law distribution with exponent $\alpha_L$.
Then, for any node $u \in L$, as $n_R \rightarrow \infty$, 
we have that $\prob{d_b(u) = k} \propto k^{-\alpha_L}$ for large $k$.

Similarly, suppose that the node weights on the right are independently sampled from a continuous power-law distribution with exponent $\alpha_R$. 
Then, for any node $v \in R$, as $n_L \rightarrow \infty$, 
we have that $\prob{d_b(u) = k} \propto k^{-\alpha_R}$ for large $k$.
\end{theorem}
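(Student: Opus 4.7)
The plan is to view $d_b(u)$ as a Poisson mixture over the weight distribution and then read off the tail behaviour by a standard Gamma-function asymptotic. By the preceding Theorem (\Cref{thm:degree_bip}), in the limit $n_R \to \infty$, the distribution of $d_b(u)$ conditional on $w_u$ is Poisson with mean $w_u$. Since $w_u$ is drawn independently from the continuous power-law density $f(w) = C w^{-\alpha_L}$ on $[w_{\min}, w_{\max}]$ (\Cref{Def:PowerLaw}), the marginal mass function is
\begin{equation*}
\prob{d_b(u) = k} = \int_{w_{\min}}^{w_{\max}} \frac{e^{-w} w^{k}}{k!} \cdot C\, w^{-\alpha_L}\, dw
= \frac{C}{k!} \int_{w_{\min}}^{w_{\max}} w^{k-\alpha_L}\, e^{-w}\, dw.
\end{equation*}

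Next I would analyze the integral. The integrand $w^{k-\alpha_L}e^{-w}$ is a (shifted) Gamma density kernel, sharply concentrated around $w = k - \alpha_L$. Provided $k$ is large but still satisfies $k = o(w_{\max})$, extending the lower limit to $0$ and the upper limit to $\infty$ introduces only a multiplicative $1 + o(1)$ error, since the lower tail contributes $O(w_{\min}^{k-\alpha_L+1})$, which is negligible compared to the peak value, and the upper tail is exponentially damped well before reaching $w_{\max}$. The extended integral equals $\Gamma(k - \alpha_L + 1)$, so
\begin{equation*}
\prob{d_b(u) = k} \;\sim\; \frac{C\,\Gamma(k - \alpha_L + 1)}{\Gamma(k+1)}.
\end{equation*}

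Finally, the standard asymptotic $\Gamma(k+a)/\Gamma(k+b) \sim k^{a-b}$ (a consequence of Stirling's formula) gives $\Gamma(k-\alpha_L+1)/\Gamma(k+1) \sim k^{-\alpha_L}$, yielding $\prob{d_b(u) = k} \propto k^{-\alpha_L}$ for large $k$. The statement for right nodes follows by interchanging the roles of $L$ and $R$ and invoking the symmetric half of \Cref{thm:degree_bip}.

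The main obstacle is the interplay between the two limits and the truncation $w_{\max} = n_R^{1/2-\delta}$: the Poisson convergence is an $n_R \to \infty$ statement, while the power-law tail is a $k \to \infty$ statement, and one must ensure $k$ lies in a regime where the Gaussian-like peak of $w^{k-\alpha_L}e^{-w}$ sits inside $[w_{\min}, w_{\max}]$ so that the truncation is harmless. This is the reason the conclusion is phrased as an asymptotic scaling for large $k$ rather than an exact identity, and careful bookkeeping of these error terms is the only nontrivial step.
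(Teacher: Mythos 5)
Your proof is correct and follows essentially the same route as the paper's: represent $d_b(u)$ as a Poisson mixture over the power-law weight via \Cref{thm:degree_bip}, complete the truncated integral to $\Gamma(k-\alpha_L+1)$, and apply the Gamma-ratio asymptotic $\Gamma(k-\alpha_L+1)/\Gamma(k+1)\sim k^{-\alpha_L}$. If anything, your bookkeeping of the truncation error (requiring $k=o(w_{\max})$ so the peak of $w^{k-\alpha_L}e^{-w}$ lies inside the integration range) is more careful than the paper's, which simply absorbs the boundary integrals into an $O(1)$ term.
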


\begin{proof}
Again, by symmetry, we only need to show the result for a node on the left.
For any node $u \in L$, according to \Cref{thm:degree_bip}, its bipartite degree distribution 
converges to a Poisson distribution with mean $w_u$. For any integer $k > \alpha$,
\begin{eqnarray*}
\prob{\degree[b]{u} = k}
  &=& \int_{1} ^{w_{\max}} \prob{\degree[b]{u} = k \mid w_u = w}\cdot f_L(w) ~ \mathrm d w\\
&=& C \int_{1} ^{w_{\max}} e^{-w} \frac{w^k}{k!} \cdot w^{-\alpha_L} ~ \mathrm d w\\
&=& \frac{C}{k!} \left( \int_{0} ^{\infty} e^{-w} w^{k-\alpha_L} ~ \mathrm d w 
- \int_{0} ^{1} e^{-w} w^{k-\alpha_L} ~ \mathrm d w
- \int_{w_{\max}} ^{\infty} e^{-w} w^{k-\alpha_L} ~ \mathrm d w  \right)\\
&=& \frac{C}{\Gamma(k+1)} (\Gamma(k - \alpha_L + 1) - O(1)) \rightarrow C k^{-\alpha_L} (1 + o(1)).
\end{eqnarray*}
Here, $C$ is a normalizing constant constant. 
The second to last line is due to the fact that $w_{\max} = \Omega(1)$, and
the last line follows because $\Gamma(k - \alpha_L + 1) / \Gamma(k + 1) \rightarrow k^{-\alpha_L}$ as $k \rightarrow \infty$.
\end{proof}


\subsection{Edge density and degree distribution in the projected graph}
To study the edge density and degree distribution in the projected graph,
we use the following quantity:
\begin{equation}
p_{u_1 u_2} := \frac{M_{R2}}{M_{R1} ^2} \frac{w_{u_1} w_{u_2}}{n_R}.
\end{equation}
The following theorem shows that $p_{u_1 u_2}$ is the asymptotic edge existence probability between the two nodes $u_1$ and $u_2$ in the \emph{projected graph}.
Note that under \Cref{assm:WeightSeqMoments}, we have $w_{u_1}, w_{u_2} = O(n_R^{1/2-\delta})$ 
and thus $p_{u_1 u_2} = O(n_R ^{-2\delta}) = o(1)$, so the projected graph is sparse as the number of nodes goes to infinity.
\begin{theorem}   \label{thm:density_proj}
For any $u_1, u_2 \in L$, as $n_R \to \infty$, we have
\[
\prob{(u_1, u_2) \in E  \mid S_L, S_R} =
p_{u_1 u_2} - \frac{p_{u_1 u_2} ^2}{2} + \left(\frac{p_{u_1 u_2}}{6} + \frac{M_{R4}}{2n_RM_{R2}^2} \right) p_{u_1 u_2}^2 \cdot (1+O(n_R^{-2\delta})).
\]
\end{theorem}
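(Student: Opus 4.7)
The starting point is that, conditional on $S_L$ and $S_R$, the bipartite edges $(u_i, z)$ for different $z \in R$ are mutually independent, so the event that no $z \in R$ is a common neighbor of $u_1$ and $u_2$ factorizes, giving
\[
\prob{(u_1, u_2) \in E \mid S_L, S_R} = 1 - \prod_{z \in R}\bigl(1 - a_z\bigr), \qquad a_z := \frac{w_{u_1} w_{u_2} w_z^2}{n_R^2 M_{R1}^2}.
\]
A direct computation gives $S_1 := \sum_z a_z = \frac{w_{u_1} w_{u_2} M_{R2}}{n_R M_{R1}^2} = p_{u_1 u_2}$, already matching the leading term of the claim. The plan is to Taylor-expand the product through order $p_{u_1u_2}^3$ and then carefully bound the remainder using \Cref{assm:WeightsSeq}.

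Expanding $\prod_z(1 - a_z)$ by inclusion-exclusion and using Newton's identities to express the elementary symmetric polynomials $e_k(a_1, \ldots, a_{n_R})$ in terms of the power sums $S_k := \sum_z a_z^k$, one has $e_1 = S_1$, $e_2 = \tfrac{1}{2}(S_1^2 - S_2)$, and $e_3 = \tfrac{1}{6}(S_1^3 - 3 S_1 S_2 + 2 S_3)$, so
\[
\prob{(u_1, u_2) \in E \mid S_L, S_R} = S_1 - \tfrac{1}{2}(S_1^2 - S_2) + \tfrac{1}{6}(S_1^3 - 3 S_1 S_2 + 2 S_3) + \sum_{k\ge 4}(-1)^{k+1} e_k.
\]
A one-line calculation using $\sum_z w_z^4 = n_R M_{R4}$ gives $S_2 = \frac{M_{R4}}{n_R M_{R2}^2}\, p_{u_1u_2}^2$. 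Substituting $S_1 = p_{u_1u_2}$ and this formula for $S_2$ into the first three terms reproduces exactly the principal expansion $p_{u_1u_2} - \tfrac{1}{2} p_{u_1u_2}^2 + \bigl(\tfrac{p_{u_1u_2}}{6} + \tfrac{M_{R4}}{2 n_R M_{R2}^2}\bigr) p_{u_1u_2}^2$ stated in the theorem.

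The remaining technical work is to absorb $-\tfrac{1}{2} S_1 S_2$, $\tfrac{1}{3} S_3$, and the tail $\sum_{k\ge 4}(-1)^{k+1} e_k$ into a $(1 + O(n_R^{-2\delta}))$ multiplicative factor. Under \Cref{assm:WeightsSeq}, $w_{u_1}, w_{u_2}, \max_z w_z = O(n_R^{1/2-\delta})$ and $M_{R2} = O(M_{R1}^2)$, which together give $p_{u_1u_2} = O(n_R^{-2\delta})$. Since $\max_z a_z \le \sum_z a_z = S_1 = p_{u_1u_2}$, one obtains $S_k \le p_{u_1u_2}^{k-2} S_2$ for $k \ge 2$ and the elementary bound $|e_k| \le S_1^k/k!$ for all $k$. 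Hence both $\tfrac{1}{2} S_1 S_2$ and $\tfrac{1}{3} S_3$ are at most $O(n_R^{-2\delta})$ times the $M_{R4}$ summand of the third-order coefficient, while the geometric tail $\sum_{k \ge 4}|e_k| = O(p_{u_1u_2}^4)$ is $O(n_R^{-2\delta})$ times the $p_{u_1u_2}/6$ summand. The main obstacle is precisely this pairing: the two summands of the third-order coefficient can have very different magnitudes depending on the weight sequence, so each error term must be bounded against the appropriate summand rather than against the full coefficient. Combining these bounds yields the claimed $(1 + O(n_R^{-2\delta}))$ multiplicative factor and completes the proof.
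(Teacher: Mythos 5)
Your proof is correct, and it starts from the same decomposition as the paper: the complement event factorizes over $z \in R$ by independence, giving $1 - \prod_z (1 - a_z)$ with $a_z = w_{u_1}w_{u_2}w_z^2/(n_R^2 M_{R1}^2)$, and $S_1 = p_{u_1u_2}$, $S_2 = \tfrac{M_{R4}}{n_R M_{R2}^2}p_{u_1u_2}^2$ are exactly the quantities the paper manipulates. Where you diverge is the expansion machinery: the paper takes $\log$ of the product, expands $\log(1-a_z)$ to second order, sums to get $-p_{u_1u_2} - \tfrac{1}{2}S_2(1+O(n_R^{-4\delta}))$, and then Taylor-expands $1 - e^{-x}$; you instead expand the product directly by inclusion--exclusion, convert the elementary symmetric polynomials $e_1, e_2, e_3$ to power sums via Newton's identities, and bound the tail $\sum_{k\ge 4}e_k$ with the Maclaurin-type inequality $e_k \le S_1^k/k!$. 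The two routes produce identical principal terms, but yours buys a more transparent error budget: the paper's write-up does not explain how the $(1+O(n_R^{-4\delta}))$ inside the logarithm propagates through exponentiation into a $(1+O(n_R^{-2\delta}))$ factor on the third-order coefficient, whereas you isolate the actual error terms ($-\tfrac12 S_1S_2$, $\tfrac13 S_3$, and the $k\ge 4$ tail) and correctly observe the subtlety that the first two must be charged to the $\tfrac{M_{R4}}{2n_RM_{R2}^2}$ summand and the tail to the $\tfrac{p_{u_1u_2}}{6}$ summand, since the two summands can differ greatly in magnitude; because both summands are nonnegative, charging each error to one summand does bound it by the full coefficient. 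The one bound worth stating explicitly is $\max_z a_z = O(n_R^{-4\delta})$ (from $w_{u_1}w_{u_2}, w_z^2 = O(n_R^{1-2\delta})$), which is what makes $S_2 = O(n_R^{-4\delta})S_1$ and hence all your remainders genuinely subleading; you only invoke the weaker $\max_z a_z \le S_1$, which still suffices for the stated $(1+O(n_R^{-2\delta}))$ factor.
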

\begin{proof}
We consider the complementary case when $u_1$ and $u_2$ are not connected in 
the projected graph. This is the case when, for any nodes $v \in R$, it is connected 
to at most one of $u_1$ and $u_2$ in the bipartite graph. For each single node $v \in R$, this case happens with probability 
$1 - \frac{w_{u_1} w_{u_2} w_v ^2}{n_R ^2 M_{R1} ^2}$.
Therefore,
\begin{eqnarray*}
& \log (\prob{(u_1,u_2) \notin E \mid S_L, S_R} )
  = \sum_{v \in R} \log\left( 1 - \frac{w_{u_1} w_{u_2} w_v ^2}{n_R ^2 M_{R1} ^2} \right)
\\
&= \sum_{v \in R} \left[ - \frac{w_{u_1} w_{u_2} w_v ^2}{n_R ^2 M_{R1} ^2} - \frac{w_{u_1}^2 w_{u_2}^2 w_v ^4}{2 n_R ^4 M_{R1} ^4} \cdot (1+O(n_R^{-4\delta}))\right]
= - p_{u_1 u_2} -   \frac{M_{R4}}{2n_RM_{R2}^2} p_{u_1 u_2}^2  \cdot (1+O(n_R^{-4\delta})).
\end{eqnarray*}
Consequently,
\[
\prob{(u_1,u_2) \in E \mid S_L, S_R} 
=  p_{u_1 u_2} - \frac{p_{u_1 u_2} ^2}{2} + \left(\frac{p_{u_1 u_2}}{6} + \frac{M_{R4}}{2n_RM_{R2}^2} \right) p_{u_1 u_2}^2 \cdot (1+O(n_R^{-2\delta})).
\]
\end{proof}

We now examine the expected degree distribution of the projected graph. 
One concern is the possibility of multi-edges in our definition of a projection, 
which occurs when two nodes $u_1, u_2 \in L$ have more than one common neighbor in the bipartite graph. 
The following lemma shows that the probability of having multi-edges
conditional on edge existence is negligible, 
meaning that we can ignore the case of multi-edges with high probability.
\begin{lemma}   \label{lem:multiedge}
Let $u_1, u_2 \in L$, and let $N_{u_1 u_2}$ be the number of common neighbors of $u_1$ and $u_2$
in the bipartite graph, then
$\prob{N_{u_1 u_2} \geq 2 \mid S_L, S_R, (u_1, u_2) \in E} = O(p_{u_1 u_2})$
as $n_R \to \infty$.
\end{lemma}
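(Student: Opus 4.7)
The plan is to work directly with the structure of $N_{u_1 u_2}$ as a sum of independent Bernoullis and compute two tail probabilities: $\prob{N_{u_1 u_2} \geq 1}$ (which equals $\prob{(u_1,u_2) \in E \mid S_L, S_R}$) and $\prob{N_{u_1 u_2} \geq 2}$, then take their ratio.

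First, for each $v \in R$, let $X_v$ be the indicator that $v$ is a common neighbor of $u_1$ and $u_2$ in the bipartite graph. Under the Chung-Lu edge probabilities in \Cref{Eq:Assm_ChungLu}, the pairs $(u_1,v)$ and $(u_2,v)$ appear independently, so $X_v$ is Bernoulli with parameter $\frac{w_{u_1} w_{u_2} w_v^2}{n_R^2 M_{R1}^2}$, and the $X_v$'s are mutually independent across $v \in R$. Then $N_{u_1 u_2} = \sum_{v \in R} X_v$, and a direct calculation gives
\[
\expect{N_{u_1 u_2} \mid S_L, S_R} = \frac{w_{u_1} w_{u_2}}{n_R^2 M_{R1}^2} \sum_{v \in R} w_v^2 = \frac{M_{R2}}{M_{R1}^2} \frac{w_{u_1} w_{u_2}}{n_R} = p_{u_1 u_2}.
\]

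Second, note that $(u_1,u_2) \in E$ is exactly the event $\{N_{u_1 u_2} \geq 1\}$. By \Cref{thm:density_proj}, $\prob{N_{u_1 u_2} \geq 1 \mid S_L, S_R} = p_{u_1 u_2}(1 + O(p_{u_1 u_2}))$, which is $\Theta(p_{u_1 u_2})$ since $p_{u_1 u_2} = o(1)$ by \Cref{assm:WeightSeqMoments}. For the numerator, I would apply Markov's inequality to $\binom{N_{u_1 u_2}}{2}$, using that $\{N_{u_1 u_2} \geq 2\} = \{\binom{N_{u_1 u_2}}{2} \geq 1\}$. By independence of the $X_v$'s,
\[
\expect{\tbinom{N_{u_1 u_2}}{2} \,\Big|\, S_L, S_R} = \sum_{v < v'} \prob{X_v = 1}\prob{X_{v'} = 1} \leq \tfrac{1}{2}\,\expect{N_{u_1 u_2}}^2 = \tfrac{1}{2} p_{u_1 u_2}^2,
\]
so $\prob{N_{u_1 u_2} \geq 2 \mid S_L, S_R} = O(p_{u_1 u_2}^2)$.

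Combining the two estimates,
\[
\prob{N_{u_1 u_2} \geq 2 \mid S_L, S_R, (u_1,u_2) \in E} = \frac{\prob{N_{u_1 u_2} \geq 2 \mid S_L, S_R}}{\prob{N_{u_1 u_2} \geq 1 \mid S_L, S_R}} = \frac{O(p_{u_1 u_2}^2)}{\Theta(p_{u_1 u_2})} = O(p_{u_1 u_2}),
\]
as required. There is no real obstacle here: the statement is essentially a Poisson-thinning phenomenon, since $N_{u_1 u_2}$ behaves approximately like a Poisson random variable with mean $p_{u_1 u_2} = o(1)$, for which $\prob{N \geq 2 \mid N \geq 1} \sim p_{u_1 u_2}/2$. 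The only care needed is invoking \Cref{thm:density_proj} (or a trivial lower bound on $\prob{N \geq 1}$ via $\prob{N \geq 1} \geq \expect{N} - \expect{\binom{N}{2}}$) to make sure the denominator is $\Theta(p_{u_1 u_2})$ rather than something smaller.
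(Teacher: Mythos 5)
Your proof is correct, and it reaches the bound by a slightly different mechanism than the paper. Both arguments share the same skeleton: reduce the conditional probability to the ratio $\prob{N_{u_1u_2}\geq 2 \mid S_L,S_R}/\prob{N_{u_1u_2}\geq 1\mid S_L,S_R}$, show the numerator is $O(p_{u_1u_2}^2)$, and use \Cref{thm:density_proj} to see the denominator is $\Theta(p_{u_1u_2})$. The difference is in the numerator: the paper uses only the first moment, via the tail-sum identity $\expect{N_{u_1u_2}} \geq \prob{N_{u_1u_2}\geq 2} + \prob{N_{u_1u_2}\geq 1}$ together with the second-order expansion $\prob{N_{u_1u_2}\geq 1} = p_{u_1u_2} - \tfrac12 p_{u_1u_2}^2 + o(p_{u_1u_2}^2)$ from \Cref{thm:density_proj}, so the $O(p_{u_1u_2}^2)$ bound on $\prob{N_{u_1u_2}\geq 2}$ is extracted from the cancellation of the leading terms. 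You instead bound the second factorial moment, $\expect{\binom{N_{u_1u_2}}{2}} \leq \tfrac12 p_{u_1u_2}^2$ by independence of the indicators $X_v$, and apply Markov's inequality. Your route is more self-contained: it needs \Cref{thm:density_proj} only to first order (and, as you note, even a Bonferroni lower bound $\prob{N_{u_1u_2}\geq 1}\geq \expect{N_{u_1u_2}} - \expect{\binom{N_{u_1u_2}}{2}}$ would do for the denominator), whereas the paper's route leans on the precise $-\tfrac12 p_{u_1u_2}^2$ term but avoids any second-moment computation and any appeal to independence beyond what \Cref{thm:density_proj} already encodes. Both are valid; yours also makes transparent the Poisson-thinning intuition you mention at the end.
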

\begin{proof}
Note that it suffices to show that $\prob{N_{u_1 u_2} \geq 2 \mid S_L, S_R} = O(p_{u_1 u_2}^2)$.
By the tail formula for expected values,
\begin{align*}
\expect{N_{u_1 u_2} \mid S_L, S_R} 
  &= \textstyle \sum_{k =1} ^\infty k \cdot \prob{N_{u_1 u_2} = k \mid S_L, S_R} \\
  &\textstyle \geq 2 \cdot \prob{N_{u_1 u_2} \ge 2 \mid S_L, S_R} + \prob{N_{u_1 u_2} = 1 \mid S_L, S_R} \\
  &\textstyle= \prob{N_{u_1 u_2} \ge 2 \mid S_L, S_R} + \prob{N_{u_1 u_2} \ge 1 \mid S_L, S_R}.
\end{align*}
Note that we also have
\[
\textstyle \expect{N_{u_1 u_2} \mid S_L, S_R} = \sum_{v \in R} \prob{(u_1, v), (u_2, v) \in E_b \mid S_L, S_R} = p_{u_1 u_2},
\]
and consequently
\[
\textstyle \prob{N_{u_1 u_2} \geq 2 \mid S_L, S_R} \leq p_{u_1 u_2} - \prob{N_{u_1 u_2} \geq 1 \mid S_L, S_R} \leq \frac{1}{2} p_{u_1 u_2}^2 + o(p_{u_1 u_2} ^2).
\]
The inequality uses the fact that the event $N_{u_1, u_2} \geq 1$ is equivalent to the existence of edge $(u_1, u_2)$
in the projected graph, which happens with probability $p_{u_1 u_2} - \frac{1}{2} \cdot p_{u_1
u_2}^2 + o(p_{u_1 u_2} ^2)$ by \Cref{thm:density_proj}. 
\end{proof}

Now we are ready to analyze the degree of a node in the projected graph.
The following theorem says that degree of a node in the projected is directly
proportional the weight of the node. Thus, at least in expectation, we can think
of the weight as a proxy for degree.
\begin{theorem}   \label{thm:deg_proj_mean}
For any $u \in L$, as $n_L, n_R \to \infty$, we have
\[
\expect{ \degree{u} \mid S_L, S_R} = 
\frac{M_{R2} M_{L1}}{M_{R1} ^2} \cdot \frac{n_L}{n_R} \cdot w_u \cdot (1 + o(1)),
\]
\end{theorem}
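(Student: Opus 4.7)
The plan is to express $\degree{u}$ as a sum of edge indicators, exchange with expectation by linearity, and apply \Cref{thm:density_proj} to each term. Concretely,
\[
\expect{\degree{u} \mid S_L, S_R} = \sum_{u' \in L \setminus \{u\}} \prob{(u,u') \in E \mid S_L, S_R},
\]
and \Cref{thm:density_proj} expands each probability as $p_{u u'} - \tfrac{1}{2} p_{u u'}^2 + \bigl(\tfrac{p_{u u'}}{6} + \tfrac{M_{R4}}{2 n_R M_{R2}^2}\bigr) p_{u u'}^2 (1+o(1))$.

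First I would evaluate the leading sum. Substituting the definition of $p_{u u'}$ and using $\sum_{u' \neq u} w_{u'} = n_L M_{L1} - w_u$,
\[
\sum_{u' \neq u} p_{u u'} \;=\; \frac{M_{R2} w_u}{M_{R1}^2 n_R}\bigl(n_L M_{L1} - w_u\bigr) \;=\; \frac{M_{R2} M_{L1}}{M_{R1}^2} \cdot \frac{n_L}{n_R} \cdot w_u \cdot (1 + o(1)),
\]
where the final step uses $w_u = O(n_R^{1/2-\delta})$ and $n_L M_{L1} \to \infty$ (since $\min[S_L] = \Omega(1)$ implies $M_{L1} = \Omega(1)$). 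This already matches the claimed asymptotic, so the remaining work is to show the other contributions are $o(1)$ times this leading term.

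For the dominant correction $-\tfrac{1}{2}\sum_{u' \neq u} p_{u u'}^2$, a direct computation gives
\[
\sum_{u' \neq u} p_{u u'}^2 \;\leq\; \Big(\frac{M_{R2}}{M_{R1}^2}\Big)^2 \cdot \frac{w_u^2 \cdot n_L M_{L2}}{n_R^2},
\]
whose ratio to the leading term is $O\bigl(\frac{M_{R2}}{M_{R1}^2} \cdot \frac{w_u M_{L2}}{n_R M_{L1}}\bigr)$. Using $M_{R2} = O(M_{R1}^2)$ from \Cref{assm:WeightSeqMoments} and the trivial bound $M_{L2} \leq \max[S_L] \cdot M_{L1} = O(n_R^{1/2-\delta}) M_{L1}$, this ratio is $O(w_u / n_R^{1/2+\delta}) = O(n_R^{-2\delta})$. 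The cubic term is even smaller, and the prefactor $\frac{M_{R4}}{2 n_R M_{R2}^2}$ is $O(n_R^{-2\delta})$ since $M_{R4} = O(n_R^{1-2\delta})$ and $M_{R2} \geq M_{R1}^2 \geq 1$, so those contributions are absorbed into the $o(1)$.

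The main obstacle is purely bookkeeping: \Cref{assm:WeightSeqMoments} provides no direct control over moments of $S_L$ beyond the range, so $M_{L2}$ has to be estimated through the crude inequality $M_{L2} \leq \max[S_L] \cdot M_{L1}$. Once this mild observation is in place, the rest is routine asymptotic expansion.
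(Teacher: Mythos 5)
Your proposal is correct and takes essentially the same route as the paper: write $\expect{\degree{u} \mid S_L, S_R}$ as a sum of edge probabilities via linearity and apply \Cref{thm:density_proj} to each term. The only difference is that you make explicit the error bookkeeping (bounding the $p_{uu'}^2$ corrections and the $w_u$ versus $n_L M_{L1}$ comparison) that the paper absorbs into a single uniform $(1+o(1))$ factor.
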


\begin{proof}
By \Cref{thm:density_proj},
\begin{align*}
\expect{\degree{u} \mid S_L, S_R} 
  = \sum_{u_1 \in L, u_1 \neq u} \prob{(u, u_1) \in E \mid S_L, S_R}
  &= \sum_{u_1 \in L, u_1 \neq u} \frac{w_u w_{u_1}}{n_R} \cdot \frac{M_{R2}}{M_{R1}^2} \cdot (1 + o(1)) \\
  &= \frac{M_{R2} M_{L1}}{M_{R1} ^2} \cdot \frac{n_L}{n_R} \cdot w_u \cdot (1 + o(1)).
\end{align*}
\end{proof}

By \Cref{thm:degree_bip_unconditonal}, the bipartite 
degree distributions of the left and right nodes are power-law distributions with exponents $\alpha_L$ and $\alpha_R$.
For such bipartite graphs, Nacher and Aktsu~\cite{nacher2011degree} showed that
the degree sequence of the projected graph follows a power law distribution.
\begin{corollary}[Section 2, \cite{nacher2011degree}]   \label{thm:deg_proj_distri}
Suppose the node weights on the left and right follow power-law distributions with exponents
$\alpha_L$ and $\alpha_R$. Then the degree distribution of the projected graph is a power-law distribution with decay exponent
$\min(\alpha_L, \alpha_R - 1)$.
\end{corollary}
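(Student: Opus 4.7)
The plan is to reduce the problem to a classical fact about the tail of a random sum of heavy-tailed random variables, using the bipartite picture supplied by \Cref{thm:degree_bip_unconditonal} together with the fact that multi-edges in the projection are negligible (\Cref{lem:multiedge}).

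First I would write the projected degree of a left node $u$ as
\[
d(u) \;=\; \Bigl|\bigcup_{v \in N_b(u)} (N_b(v) \setminus \{u\})\Bigr|,
\]
where $N_b(\cdot)$ denotes the bipartite neighborhood. By \Cref{lem:multiedge}, almost every pair $(u,u')$ that becomes adjacent in the projection is adjacent through exactly one right-node, so up to a lower-order error term
\[
d(u) \;\approx\; \sum_{v \in N_b(u)} (d_b(v) - 1).
\]
Thus $d(u)$ is, conditionally on $d_b(u)=k$, a sum of $k$ (essentially i.i.d.) copies of the bipartite degree of a right node reached by following a random bipartite edge from $u$.

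Next I would identify the two relevant tails. By \Cref{thm:degree_bip_unconditonal}, the unconditional bipartite degree of a left node has a power-law tail with exponent $\alpha_L$, so $\Pr(d_b(u)=k) \propto k^{-\alpha_L}$ for large $k$. The random right-neighbor $v$ reached from $u$ is size-biased with respect to the right degree sequence: a right node of weight $w_v$ appears with probability proportional to $w_v$, which multiplies the density by a factor of $w$ and therefore shifts the power-law exponent by one. Combined with \Cref{thm:degree_bip_unconditonal} on the right side, the summand $d_b(v)-1$ has tail exponent $\alpha_R - 1$.

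Finally I would invoke the standard heavy-tailed random-sum fact: if $N$ has a power-law tail with exponent $\beta$ and $X_1,X_2,\ldots$ are i.i.d.\ with a power-law tail with exponent $\gamma$ (and finite mean, which is the regime we are in since $\alpha_R>3$ implies $\alpha_R - 1 > 2$), then $S_N = \sum_{i=1}^N X_i$ has a power-law tail with exponent $\min(\beta, \gamma)$. Indeed, either $N$ itself is exceptionally large (giving tail exponent $\beta$, since $\mathbb{E} X_1 < \infty$ makes $S_N$ concentrate around $N \mathbb{E} X_1$) or one of the summands is exceptionally large (giving tail exponent $\gamma$), and these two regimes dominate. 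Setting $\beta = \alpha_L$ and $\gamma = \alpha_R - 1$ yields the claimed exponent $\min(\alpha_L, \alpha_R - 1)$.

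The main obstacle is rigorously justifying the random-sum tail lemma in our setting, since the summands $d_b(v)-1$ for $v \in N_b(u)$ are not exactly independent (they share the constraint of being incident to $u$) and the approximation $d(u)\approx\sum(d_b(v)-1)$ introduces an additive error from coincident right-neighbors. Both issues are controlled by \Cref{lem:multiedge} and the sparsity implied by \Cref{assm:WeightSeqMoments}, but carrying out the tail estimate cleanly is the delicate step; this is precisely the calculation done in \cite{nacher2011degree}, which we cite rather than reproduce.
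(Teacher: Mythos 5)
Your proposal is sound in outline, but note that the paper does not actually prove this corollary at all: it is imported verbatim from Section 2 of \cite{nacher2011degree} (where it is derived via generating functions in the style of Newman), and the only ``proof'' in the paper is the observation that \Cref{thm:degree_bip_unconditonal} puts the bipartite degrees into the power-law regime that the cited result requires. Your sketch is therefore a genuine addition rather than a restatement: the decomposition $d(u)\approx\sum_{v\in N_b(u)}(d_b(v)-1)$, the size-biasing of the right-neighbor's weight (density tilted by a factor of $w$, shifting the exponent from $\alpha_R$ to $\alpha_R-1$), and the heavy-tailed random-sum principle giving tail exponent $\min(\alpha_L,\alpha_R-1)$ is exactly the probabilistic shadow of the generating-function computation, and each ingredient is consistent with the paper's machinery (\Cref{thm:degree_bip}, \Cref{thm:degree_bip_unconditonal}, \Cref{lem:multiedge}). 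Two small points of care: first, conditional on the weight sequence and on $N_b(u)$, the summands $d_b(v)-1$ are in fact independent (they involve disjoint sets of potential bipartite edges), so the dependence you worry about is milder than stated --- the real randomness is that the neighbors' weights are i.i.d.\ size-biased draws; second, the overcount from two right-neighbors of $u$ sharing a common left-neighbor is controlled in expectation by summing the bound of \Cref{lem:multiedge} over $u'$, but upgrading that to a statement about the tail of $d(u)$ is the one step that neither you nor the paper carries out, and deferring it to \cite{nacher2011degree} is the same move the authors make.
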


When $\alpha_R \in (3, 4)$, \Cref{assm:WeightSeqMoments} is satisfied by \Cref{Prp:AssumptionSatisfied}, and
the projected graph would have power-law degree distribution with decay exponent within $(2, 3)$, which is
a standard range for classical theoretical models~\cite{dorogovtsev2002evolution} and is also observed in real-world data~\cite{broido2019scale}. We estimate $\alpha_R \in (3, 4)$ for several real-world bipartite networks
that we analzye (\cref{sec:power-law-stats}).

\subsection{Clustering in the projected graph}
In this section we compute the expected value of the clustering and closure coefficients. 
\Cref{thm:lccf_wt} rigorously analyzes the expected value of
local clustering coefficients on networks generated from projections of general bipartite random graphs. 
Our results show how (for a broad class of random graphs)
the expected local clustering coefficient varies with the node weight: 
it decays at a slower rate for small weight and then decays as the inverse of the weight for large weights.
Combined with the result that the expected projected degree is proportional to the node weight (\Cref{thm:deg_proj_mean}), 
this says that there is an inverse correlation of node degree with the local clustering coefficient, which we also verify with simulation.
This has long been a noted empirical property of complex networks~\cite{newman2003structure}, and our analysis
provides theoretical grounding, along with other recent results~\cite{bloznelis2019local,bloznelis2017correlation}.

\begin{theorem}   \label{thm:lccf_wt}
If \Cref{assm:WeightSeqMoments} is satisfied with $\delta > \frac{1}{10}$,
then conditioned on $S_L$ and $S_R$ for any node $u \in L$, we have in the projected graph that
\[
\lccf{u} =\frac{1}{1 + \frac{M_{R2} ^2}{M_{R3} M_{R1}} w_u} + o(1).
\]
\end{theorem}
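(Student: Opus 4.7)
The plan is to write $\lccf{u}$ as the ratio
\[
\lccf{u} = \frac{\sum_{v \neq w} \prob{(u,v),(u,w),(v,w) \in E \mid S_L, S_R}}{\sum_{v \neq w} \prob{(u,v),(u,w) \in E \mid S_L, S_R}},
\]
where the sums run over distinct $v,w \in L \setminus \{u\}$, and to evaluate both numerator and denominator by conditioning on $N(u) \subseteq R$, the bipartite neighborhood of $u$. The crucial structural observation is that, given $N(u)$, the bipartite edges incident to distinct left nodes $v, w \in L \setminus \{u\}$ are independent, and $(u,v) \in E$ holds iff $v$ attaches in the bipartite graph to some $z \in N(u)$. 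I would introduce the auxiliary quantities $W_k(u) := \sum_{z \in N(u)} w_z^k$ for $k=1,2$ and, by linearity and a Bernoulli-variance bound, compute $\expect{W_1(u)\mid w_u} = w_u M_{R2}/M_{R1}$, $\expect{W_1(u)^2 \mid w_u} = w_u^2 M_{R2}^2/M_{R1}^2 + (w_u M_{R3}/M_{R1})(1+o(1))$, and $\expect{W_2(u)\mid w_u} = w_u M_{R3}/M_{R1}$.

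For the denominator, a first-order Taylor expansion and conditional independence give $\prob{(u,v),(u,w)\in E \mid N(u)} = (w_v w_w/(n_R^2 M_{R1}^2)) W_1(u)^2 (1+o(1))$. Averaging over $N(u)$ and summing over pairs $(v,w) \in (L \setminus \{u\})^2$ with $v \neq w$---whose weight-product sum is $(n_L M_{L1})^2 (1+o(1))$---yields the leading estimate $(n_L M_{L1})^2 \expect{W_1(u)^2 \mid w_u}/(n_R^2 M_{R1}^2)$. For the numerator, the dominant triangle contribution comes from configurations in which $v$ and $w$ share a common right-side neighbor $z^{\ast} \in N(u)$, which single-handedly witnesses all three projected edges; this contributes $(w_v w_w/(n_R^2 M_{R1}^2)) W_2(u)$ per pair, and after averaging over $N(u)$ and summing gives $(n_L M_{L1})^2 \expect{W_2(u) \mid w_u}/(n_R^2 M_{R1}^2)$. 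The remaining configurations---triangles whose three edges are witnessed by two or three distinct right-side nodes, and multi-edge effects---I would bound using \Cref{lem:multiedge} together with direct enumeration, showing that the fully-independent contribution scales like $p_{uv} p_{uw} p_{vw}$ and is suppressed relative to the single-witness term by a factor of order $n_R$.

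Forming the ratio, the common factors $w_v w_w$, $(n_L M_{L1})^2$, and $(n_R M_{R1})^{-2}$ cancel, leaving
\[
\lccf{u} = \frac{\expect{W_2(u) \mid w_u}}{\expect{W_1(u)^2 \mid w_u}} + o(1) = \frac{w_u M_{R3}/M_{R1}}{w_u^2 M_{R2}^2/M_{R1}^2 + w_u M_{R3}/M_{R1}} + o(1),
\]
which simplifies to the claimed $1/(1 + w_u M_{R2}^2/(M_{R3} M_{R1})) + o(1)$ after dividing numerator and denominator by $w_u M_{R3}/M_{R1}$. The main obstacle I anticipate is the quantitative control of error terms uniformly in $w_u \le n_R^{1/2-\delta}$. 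Three kinds of remainders must each be shown to be $o(1)$: the Taylor residuals in approximating $1 - \prod_{z \in N(u)}(1-p_{vz})$ by $\sum_{z \in N(u)} p_{vz}$; the cross contributions from triangles with multiple distinct right-side witnesses; and the fluctuation of $W_1(u)^2$ around its conditional mean, which matters because the wedge count depends on a second moment rather than a mean. These residuals scale like products of ratios such as $M_{R4}/(n_R M_{R2}^2)$ together with powers of $\max[S_R]/n_R^{1/2}$, and the strengthened hypothesis $\delta > 1/10$ is precisely the room needed to absorb the fourth-moment pieces---in particular those arising from the $M_{R4}$ correction in \Cref{thm:density_proj}---against the leading term uniformly across the admissible weight range.
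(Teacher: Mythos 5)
Your plan is correct at leading order and reaches the right formula, but by a genuinely different computational route than the paper. The paper never conditions on $N(u)$: it proves \Cref{lem:wedge_prob} and \Cref{lem:triangle_prob} by writing the complementary events as products over all $v \in R$ and expanding the logarithm, then divides the two in \Cref{lem:wedge_close_prob} and observes that the resulting per-wedge closure probability is independent of the endpoint weights $w_{u_1}, w_{u_2}$, so the weighted average over wedges is immediate. Your conditioning on $N(u)$ with the statistics $W_k(u)=\sum_{z\in N(u)}w_z^k$ is an attractive alternative: it makes transparent that the wedge count splits into $\expect{W_1}^2$ (independent witnesses) plus $\mathrm{Var}(W_1)\approx w_uM_{R3}/M_{R1}$ (a shared witness), which is exactly the origin of the $\frac{M_{R1}M_{R3}}{M_{R2}^2 w_u}$ correction in \Cref{lem:wedge_prob}, and your three moment formulas and the single-witness triangle term $\expect{W_2}$ reproduce the paper's leading constants exactly. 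What the paper's route buys is a cleaner error bookkeeping (a single log-expansion per event, with explicit $O(n_R^{-2\delta})$ and $O(n_R^{1/2-5\delta})$ remainders); what yours buys is structural insight and a reduction of the second-moment fluctuation question to standard concentration of a sum of independent bounded terms.

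Two caveats. First, your claim that the multi-witness (6-cycle) triangle contribution is ``suppressed relative to the single-witness term by a factor of order $n_R$'' is not the right comparison and is false in the regime $\delta\in(1/10,1/6]$: the ratio of $p_{uv}p_{uw}p_{vw}$ to $p_{uv}p_{uw}\frac{M_{R1}M_{R3}}{M_{R2}^2 w_u}$ is $\frac{w_uw_vw_wM_{R2}^3}{n_RM_{R1}^3M_{R3}} = O(n_R^{1/2-3\delta})$, which diverges for $\delta<1/6$ when all three weights are near $n_R^{1/2-\delta}$. The correct argument (and the one the paper uses) is that the 6-cycle term is $p_{uv}p_{uw}p_{vw}=o(p_{uv}p_{uw})$, i.e., negligible against the \emph{wedge} probability, which is why the conclusion carries only an additive $o(1)$ rather than a multiplicative $1+o(1)$ for $\delta\le 1/6$. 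Second, the uniform control of the Taylor residuals and of $\expect{W_1^3}$-type fluctuations over $w_u\le n_R^{1/2-\delta}$ is where $\delta>1/10$ actually gets used, and you flag it without carrying it out; to complete the proof you would need to show, e.g., $\frac{w_v\expect{W_1(u)^3}}{n_RM_{R1}}=o\bigl(\expect{W_1(u)^2}\bigr)$ plus the analogous bound for the numerator, which is where the $M_{R4}$ hypothesis enters.
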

Besides the trend of how local clustering coefficient decays with node weight,
we highlight how the sequence moment of $S_R$ influences the clustering coefficient.
If the distribution of $S_R$ has a heavier tail, then $\frac{M_{R2} ^2}{M_{R3} M_{R1}}$ is small (via Cauchy-Schwartz),
and one would expect higher local clustering compared to cases where
$S_R$ is light-tailed~\cite{bloznelis2017correlation} or
uniform~\cite{bloznelis2013degree, deijfen2009random}.
We also observe this higher level of clustering in simulations (\Cref{fig:cc-vs-degree}).

We break the proof of \Cref{thm:lccf_wt} into several lemmas. 
From this point on, we assume $\delta > 1/10$.
We first present the following results on the limiting probability of wedge and triangle existence,
with proofs given in \Cref{sec:proofs}.
\begin{lemma}\label{lem:wedge_prob}
As $n_R \rightarrow \infty$, for any node triple $(u_1, u, u_2)$,
the probability that they form a wedge centered at $u$ is
\begin{eqnarray*}
 \prob{(u, u_1), (u, u_2) \in E \mid S_L, S_R} 
&=& \left(1 +  \frac{M_{R1} M_{R3}}{M_{R2}^2} \cdot \frac{1}{w_u} \right)  p_{u u_1} p_{u u_2} \cdot (1 + o(1)).
\end{eqnarray*}
\end{lemma}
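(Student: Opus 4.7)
The plan is to use inclusion-exclusion on the events $A_i = \{(u, u_i) \in E\}$, $i=1,2$, writing
\[
\prob{A_1 \cap A_2 \mid S_L, S_R} \;=\; 1 - \prob{\bar A_1} - \prob{\bar A_2} + \prob{\bar A_1 \cap \bar A_2}.
\]
The marginal probabilities $\prob{A_i}$ and $\prob{\bar A_i}$ are already available from \Cref{thm:density_proj}, so the whole calculation reduces to analyzing $\prob{\bar A_1 \cap \bar A_2}$.

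The crucial observation is that $\prob{\bar A_1 \cap \bar A_2}$ factors over the right-nodes, because the bipartite edges are mutually independent. Writing $\alpha_v = w_u w_v/(n_R M_{R1})$, $\beta_v = w_{u_1} w_v/(n_R M_{R1})$, $\gamma_v = w_{u_2} w_v/(n_R M_{R1})$ for the relevant bipartite edge probabilities incident to $v \in R$, the probability that a given $v$ witnesses \emph{neither} projected edge is $1 - \alpha_v \beta_v - \alpha_v \gamma_v + \alpha_v \beta_v \gamma_v$ (inclusion-exclusion again, at the bipartite level). Hence
\[
\prob{\bar A_1 \cap \bar A_2 \mid S_L, S_R} = \prod_{v \in R}\bigl(1 - \alpha_v\beta_v - \alpha_v\gamma_v + \alpha_v\beta_v\gamma_v\bigr).
\]

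Next I would take logarithms, apply the Taylor expansion $\log(1+x) = x - x^2/2 + O(x^3)$, and collect the contributions using the moment sums. Crucially, the first-order terms give $-\sum_v \alpha_v\beta_v = -p_{uu_1}$, $-\sum_v \alpha_v\gamma_v = -p_{uu_2}$, and $\sum_v \alpha_v\beta_v\gamma_v = \frac{w_u w_{u_1} w_{u_2} M_{R3}}{n_R^2 M_{R1}^3}$, which after algebraic rearrangement is exactly $\frac{M_{R1} M_{R3}}{M_{R2}^2}\cdot \frac{p_{uu_1}p_{uu_2}}{w_u}$ — this is the ``common-witness'' term responsible for the multiplicative correction in the lemma. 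The second-order Taylor terms contribute $\tfrac12 p_{uu_i}^2$-type corrections that match precisely the expansions of $\prob{\bar A_i}$ derived in \Cref{thm:density_proj}, so after exponentiating we obtain
\[
\prob{\bar A_1 \cap \bar A_2} = \prob{\bar A_1}\prob{\bar A_2}\cdot\exp\!\left(\frac{M_{R1}M_{R3}}{M_{R2}^2}\cdot\frac{p_{uu_1}p_{uu_2}}{w_u}\,(1+o(1))\right).
\]
Plugging back into inclusion-exclusion and using the elementary identity $1 - \prob{\bar A_1} - \prob{\bar A_2} + \prob{\bar A_1}\prob{\bar A_2} = \prob{A_1}\prob{A_2}$, together with $\prob{A_i} = p_{uu_i}(1+o(1))$ and $\prob{\bar A_i} = 1 - o(1)$, yields the claimed expression.

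The main obstacle is controlling the higher-order Taylor error terms and the cross-sums such as $\sum_v \alpha_v^2\beta_v\gamma_v$, $\sum_v(\alpha_v\beta_v)(\alpha_v\gamma_v)$, etc.\ that appear when one expands the $\log$. These involve fourth and higher moments of $S_R$, and showing they are of smaller order than $p_{uu_1}p_{uu_2}/w_u$ is exactly where the moment control on $S_R$ in \Cref{assm:WeightSeqMoments} (in particular $M_{R2} = O(M_{R1}^2)$ and $M_{R4} = O(n_R^{1-2\delta})$) together with the bounded range $\max[S_L, S_R] = O(n_R^{1/2-\delta})$ must be invoked, in the same spirit as the error estimates in the proof of \Cref{thm:density_proj}. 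Once those sums are bounded by the desired $o(1)$ relative error against $p_{uu_1}p_{uu_2}$, the conclusion follows.
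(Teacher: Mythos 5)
Your proposal follows essentially the same route as the paper's proof: the same inclusion--exclusion reduction to $\prob{\bar A_1 \cap \bar A_2}$, the same factorization over right nodes with per-node probability $1 - \alpha_v\beta_v - \alpha_v\gamma_v + \alpha_v\beta_v\gamma_v$, the same log-and-Taylor expansion, and the same identification of $\sum_v \alpha_v\beta_v\gamma_v$ as the source of the $\frac{M_{R1}M_{R3}}{M_{R2}^2}\cdot\frac{1}{w_u}$ correction. The only part you defer --- bounding the higher-order terms, which produce factors like $\frac{M_{R4}}{n_R M_{R2}^2}(p_{uu_1}+p_{uu_2})^2$ and weight ratios $w_{u_1}/w_{u_2}$ of size up to $n_R^{1/2-\delta}$, and which is precisely where $\delta > 1/10$ is used --- is carried out explicitly in the paper and goes through exactly as you anticipate.
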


\begin{lemma}\label{lem:triangle_prob}
In the limit of $n_R \rightarrow \infty$, the probability of a node triple
$(u_1, u, u_2)$ forms a triangle is
\begin{eqnarray*}
\prob{(u, u_1) , (u, u_2) , (u_1, u_2) \in E \mid S_L, S_R}
= p_{u u_1} p_{u u_2} \cdot \frac{M_{R1}M_{R3}}{M_{R2}^2}\cdot \frac{1}{w_u} \cdot (1 + o(1))
+ o(p_{u u_1} p_{u u_2}).
\end{eqnarray*}
\end{lemma}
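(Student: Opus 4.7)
The plan is to decompose the triangle event by whether a single right node is adjacent in the bipartite graph to all three of $u, u_1, u_2$. Writing $C_v$ for the event that right node $v$ is adjacent to $u$, $u_1$, and $u_2$ simultaneously, and $A = \bigcup_{v\in R} C_v$ for the event that at least one such witness exists, I split $\prob{T} = \prob{A} + \prob{T\cap A^c}$, where $T$ is the triangle event. The single-witness event $A$ will produce the main term, and $\prob{T\cap A^c}$ will be absorbed into the $o(p_{uu_1}p_{uu_2})$ error.

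For the main term, the three bipartite edges incident to a given $v$ are mutually independent Bernoullis, so $\prob{C_v} = \frac{w_u w_{u_1} w_{u_2} w_v^3}{n_R^3 M_{R1}^3}$. Since $\{C_v\}_{v\in R}$ are independent across right nodes, inclusion-exclusion gives
\[
\prob{A} \;=\; 1 - \prod_{v\in R}\bigl(1-\prob{C_v}\bigr) \;=\; \sum_{v\in R}\prob{C_v} \;+\; O\!\left(\Bigl(\sum_{v\in R}\prob{C_v}\Bigr)^{2}\right),
\]
and $\sum_v \prob{C_v} = \frac{w_u w_{u_1} w_{u_2} M_{R3}}{n_R^2 M_{R1}^3}$. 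A short rewrite using $p_{uu_i} = \frac{w_u w_{u_i} M_{R2}}{n_R M_{R1}^2}$ identifies this leading sum with $p_{uu_1}p_{uu_2}\cdot\frac{M_{R1}M_{R3}}{M_{R2}^2 w_u}$, matching the statement. The squared correction term can then be shown to be $o(p_{uu_1}p_{uu_2})$ by invoking \Cref{assm:WeightSeqMoments} with $\delta>1/10$: the bound $\max[S_L,S_R]=O(n_R^{1/2-\delta})$ controls $w_{u_1}w_{u_2}$, while Cauchy--Schwarz ($M_{R3}^2\le M_{R2}M_{R4}$) together with $M_{R4}=O(n_R^{1-2\delta})$ controls the moment ratio.

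For $\prob{T\cap A^c}$ I exploit a combinatorial observation: a right node adjacent to only two of $\{u,u_1,u_2\}$ witnesses exactly one of the three projection edges, and a right node adjacent to all three would place us in $A$. Hence on $A^c$ the three projection edges must be witnessed by three \emph{distinct} right nodes $v_1,v_2,v_3$, yielding
\[
\prob{T\cap A^c} \;\le\; \sum_{v_1,v_2,v_3 \text{ distinct}} \prob{v_1\sim u,u_1}\,\prob{v_2\sim u,u_2}\,\prob{v_3\sim u_1,u_2}.
\]
Dropping the distinctness constraint (an overcount), each single-node sum equals the leading-order contribution to $p_{uu_1}$, $p_{uu_2}$, $p_{u_1u_2}$ computed in the proof of \Cref{thm:density_proj}, so the total is $O(p_{uu_1}p_{uu_2}p_{u_1u_2}) = o(p_{uu_1}p_{uu_2})$ using $p_{u_1u_2}=O(n_R^{-2\delta})=o(1)$.

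The main obstacle is showing that \emph{both} error contributions — the inclusion-exclusion square and the distinct-witness term — are strictly of smaller order than $p_{uu_1}p_{uu_2}$, rather than matching it. This is precisely where the strengthened assumption $\delta>1/10$ is used; without it the squared leading sum could be of the same order as the main term. Handling the distinct-$v_i$ constraint is by contrast painless, since removing it only inflates the bound.
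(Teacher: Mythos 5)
Your proposal is correct and follows essentially the same route as the paper: split the triangle event into the case of a single right-node witness adjacent to all of $u,u_1,u_2$ (giving the main term $p_{uu_1}p_{uu_2}\frac{M_{R1}M_{R3}}{M_{R2}^2 w_u}$ via independence over right nodes) and the complementary case of three distinct witnesses forming a $6$-cycle (bounded by $p_{uu_1}p_{uu_2}p_{u_1u_2}=o(p_{uu_1}p_{uu_2})$). The only cosmetic difference is that you expand $1-\prod_v(1-\prob{C_v})$ by inclusion--exclusion where the paper passes through $\log$ and $\exp$; both yield the same error control, and your accounting of where \Cref{assm:WeightSeqMoments} enters is sound.
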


Now we have the following key result on the conditional probability triadic closure.

\begin{lemma}\label{lem:wedge_close_prob}
In the limit of $n_L, n_R \rightarrow \infty$, if a node triple $(u_1, u, u_2)$ 
forms an wedge, then the probability of this wedge being closed is
\[
\prob{ (u_1, u_2) \in E \mid S_L, S_R, (u, u_1), (u, u_2) \in E}=
\frac{1}{1 + \frac{M_{R2} ^2}{M_{R3} M_{R1}} w_u} + o(1).
\]
\end{lemma}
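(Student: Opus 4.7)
The plan is to treat this as a direct Bayes-style computation using the two preceding lemmas. By the definition of conditional probability,
\[
\prob{ (u_1,u_2)\in E \mid S_L, S_R, (u,u_1),(u,u_2)\in E} = \frac{\prob{(u,u_1),(u,u_2),(u_1,u_2)\in E \mid S_L,S_R}}{\prob{(u,u_1),(u,u_2)\in E \mid S_L,S_R}},
\]
so the task reduces to plugging in \Cref{lem:triangle_prob} for the numerator and \Cref{lem:wedge_prob} for the denominator and simplifying.

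Introducing the shorthand $X := \frac{M_{R1} M_{R3}}{M_{R2}^2}\cdot \frac{1}{w_u}$, the wedge probability becomes $(1+X)\,p_{uu_1}p_{uu_2}(1+o(1))$ and the triangle probability becomes $X \, p_{uu_1}p_{uu_2}(1+o(1)) + o(p_{uu_1}p_{uu_2})$. I will then form the ratio, cancel the common factor $p_{uu_1}p_{uu_2}$, and obtain
\[
\frac{X + o(1)}{(1+X)(1+o(1))} = \frac{X}{1+X} + o(1) = \frac{1}{1 + 1/X} + o(1) = \frac{1}{1 + \tfrac{M_{R2}^2}{M_{R3} M_{R1}} w_u} + o(1),
\]
which is the claimed expression.

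The only place that requires care is bounding the error terms uniformly in $w_u$. Since $\min[S_L] = \Omega(1)$ by \Cref{assm:WeightsSeq}, we have $w_u \geq 1$ and hence $X = O(1)$ (using the moment control $M_{R3}/(M_{R1}M_{R2}^2) \lesssim 1$ implied by \Cref{assm:WeightSeqMoments} via Cauchy--Schwarz on $M_{R2}^2 \le M_{R1}M_{R3}$ together with $M_{R2} = O(M_{R1}^2)$). This keeps the denominator $1+X$ bounded away from $0$ and $\infty$, so the additive $o(1)$ in both the numerator and the denominator translates cleanly into an $o(1)$ additive error in the final ratio.

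The main obstacle I anticipate is not algebraic but rather verifying that the $o(p_{uu_1}p_{uu_2})$ term in \Cref{lem:triangle_prob} is genuinely negligible relative to the wedge probability, which is of order $(1+X) p_{uu_1}p_{uu_2}$: since $(1+X)$ is $\Theta(1)$ uniformly in $u$ under our assumptions, dividing such an error by the wedge probability yields $o(1)$, so no additional work beyond tracking the uniform bound on $X$ is needed.
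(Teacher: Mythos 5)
Your proposal is correct and is essentially identical to the paper's proof: both write the conditional probability as the ratio of the triangle probability (\Cref{lem:triangle_prob}) to the wedge probability (\Cref{lem:wedge_prob}), cancel $p_{uu_1}p_{uu_2}$, and simplify to $\frac{1}{1+\frac{M_{R2}^2}{M_{R3}M_{R1}}w_u}+o(1)$. One small quibble: your claim that $X=\frac{M_{R1}M_{R3}}{M_{R2}^2 w_u}=O(1)$ is neither needed nor what Cauchy--Schwarz gives (it yields $M_{R2}^2\le M_{R1}M_{R3}$, a lower bound on that ratio); the error control follows simply from $1+X\ge 1$ and $X/(1+X)\le 1$, which is all the paper implicitly uses.
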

\begin{proof}
By combining the result of \Cref{lem:wedge_prob,lem:triangle_prob}, we have
\begin{align*}
& \prob{ (u_1, u_2) \in E \mid S_L, S_R, (u, u_1), (u, u_2) \in E} \textstyle
= \frac{\prob{(u, u_1) , (u, u_2) , (u_1, u_2) \in E \mid S_L, S_R}}
{\prob{(u, u_1), (u, u_2) \in E \mid S_L, S_R}} 
\\ & = \textstyle 
\frac{p_{u u_1} p_{u u_2} \frac{M_{R1}M_{R3}}{M_{R2}^2}\cdot \frac{1}{w_u} \cdot (1+o(1)) + o(p_{u u_1} p_{u u_2})}
{\left(\frac{M_{R1}M_{R3}}{M_{R2} ^2} \cdot \frac{1}{w_u} + 1\right) p_{u u_1} p_{u u_2} \cdot (1 + o(1))}
= \frac{1 + o(1)}{1 + \frac{M_{R2} ^2}{M_{R3} M_{R1}} w_u} + o(1). 
\end{align*}
\end{proof}

Finally, we are ready to prove our main result.
\begin{proof}[Proof of \Cref{thm:lccf_wt}]
According to \Cref{Eq:CondProb_LCC}, the local clustering coefficient is 
the conditional probability that a randomly chosen wedge centered at node $u$ forms a triangle. 
\Cref{lem:wedge_close_prob} shows that this probability is asymptotically the same 
regardless of the weights on the wedge endpoints $u_1, u_2$. 
Therefore conditioned on $S_L$ and $S_R$, we have
\begin{align*}
\lccf{u} = 
\prob{ (u_1, u_2) \in E \mid S_L, S_R, (u, u_1), (u, u_2) \in E}=
\frac{1}{1 + \frac{M_{R2} ^2}{M_{R3} M_{R1}} w_u} + o(1).
\end{align*}
\end{proof}

\begin{figure}[t]
\includegraphics[width=0.49\textwidth]{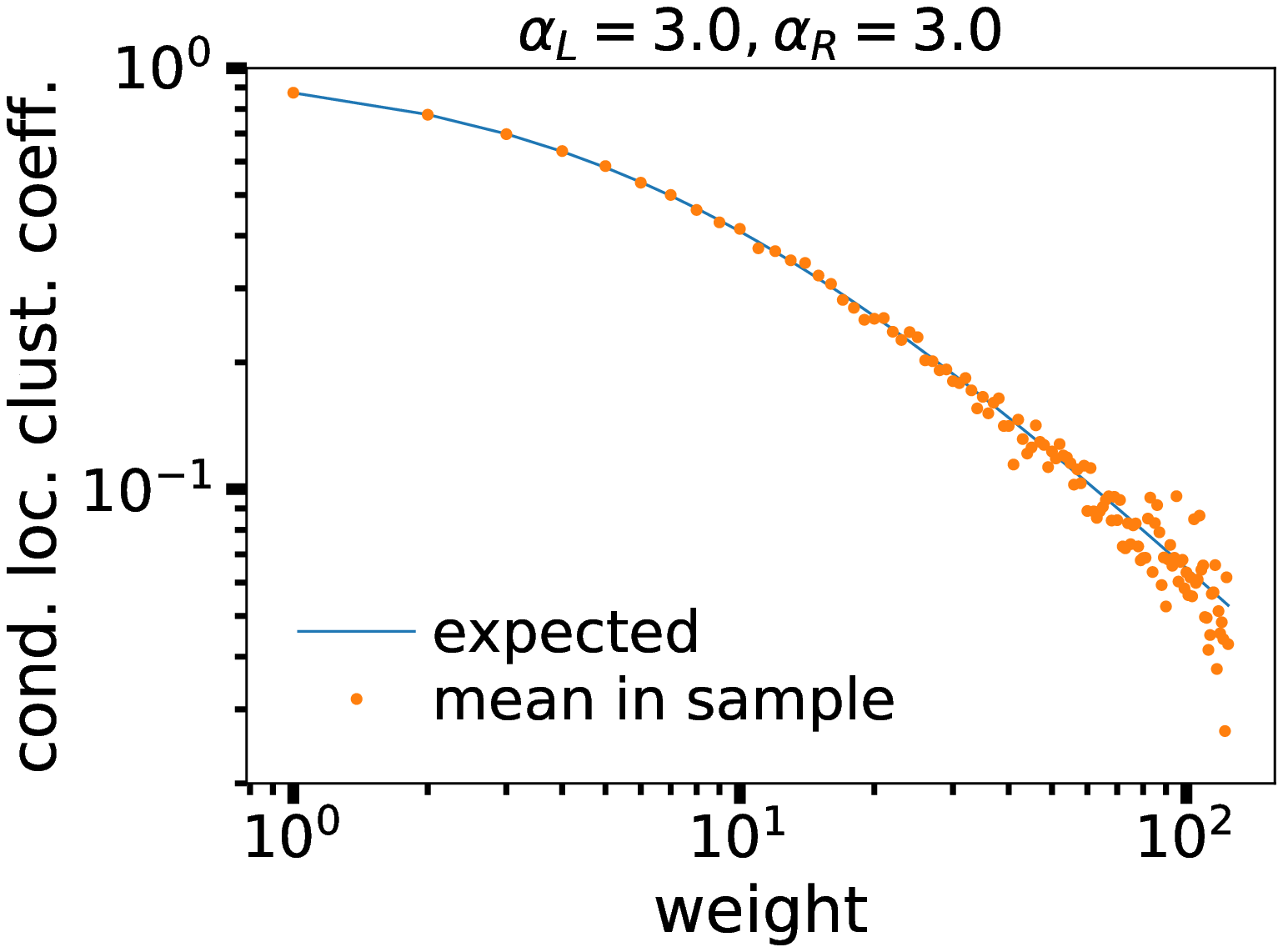}
\includegraphics[width=0.49\textwidth]{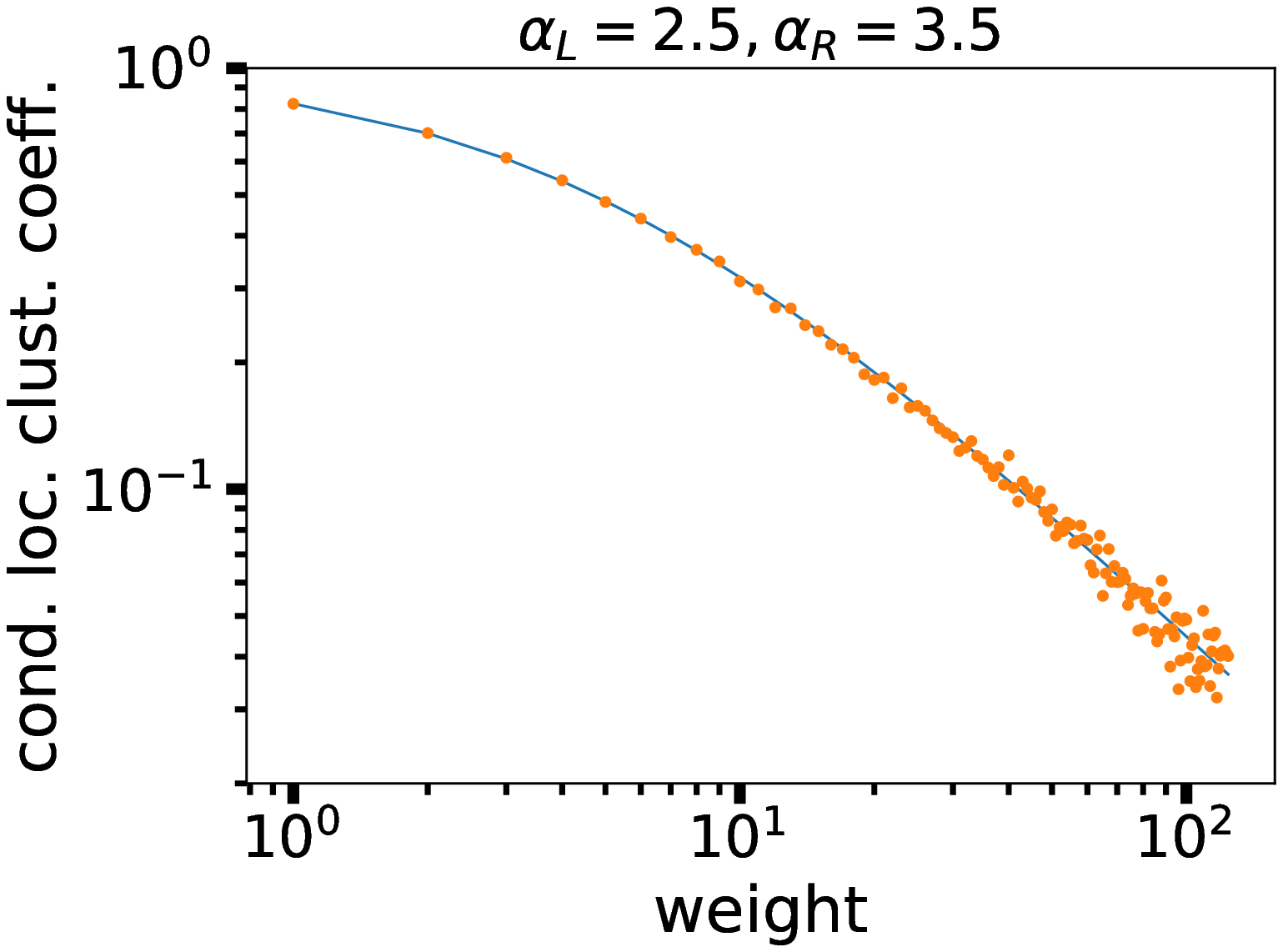}
\caption{Conditional local clustering coefficient distribution on simulated graphs as a function of node weight $w_u$,
where left and right node weights are sampled from a discrete power law distribution with decay rates $\alpha_L$ and $\alpha_R$.
The dots are the mean conditional local clustering coefficients for all nodes with that weight,
and the curve is the prediction from \Cref{thm:lccf_wt}.}
\label{fig:local_cluster}
\end{figure}

\Cref{fig:local_cluster} shows the mean conditional local clustering coefficient of a projected graph as a function of node weights $w_u$
for networks where $n_L = n_R =$ 10,000,000 and weights drawn from discrete power-law distributions with different decay parameters. 
We cap the maximum value of the weights at $n_L^{0.3}$, which corresponds to $\delta = 0.2$ in \cref{assm:WeightSeqMoments}.
The empirical clustering is close to the expected value from \cref{thm:lccf_wt}.

We can also analyze the global clustering coefficient (also called the \emph{transitivity}) of the projected graph.
The following theorem says that the global clustering tends to a constant bounded away from 0.
\begin{theorem}   \label{thm:gccf}
If \Cref{assm:WeightSeqMoments} is satisfied with $\delta > \frac{1}{10}$,
then conditioned on $S_L$ and $S_R$, we have in the projected graph that
\[
C_G = \frac{1}{1 + \frac{M_{R2} ^2}{M_{R3} M_{R1}} \cdot \frac{M_{L2}}{M_{L1}}}
 + o(1).
\]
\end{theorem}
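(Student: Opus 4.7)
The plan is to express the conditional global clustering coefficient, conditional on $S_L$ and $S_R$, as a ratio of aggregated expected counts and then apply the asymptotics already established in \Cref{lem:wedge_prob,lem:triangle_prob}. Since definition \Cref{Eq:CondProb_GCC} treats $u,v,w$ as unspecified nodes (a uniformly chosen ordered triple of distinct nodes in $L$), and conditional on $S_L,S_R$ the edge events factor across disjoint edges, one may write
\begin{equation*}
C_G \;=\; \frac{\sum_{u}\sum_{u_1 \neq u_2,\; u_1,u_2 \neq u} \prob{(u,u_1),(u,u_2),(u_1,u_2)\in E \mid S_L,S_R}}{\sum_{u}\sum_{u_1 \neq u_2,\; u_1,u_2 \neq u} \prob{(u,u_1),(u,u_2)\in E \mid S_L,S_R}}.
\end{equation*}

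The main computation is to substitute \Cref{lem:triangle_prob} in the numerator and \Cref{lem:wedge_prob} in the denominator and carry out the sums. The shared building block is
\begin{equation*}
\sum_{u_1 \neq u_2,\; u_1,u_2 \neq u} p_{u u_1} p_{u u_2} \;=\; \Bigl(\tfrac{M_{R2}\, w_u \cdot n_L M_{L1}}{M_{R1}^2 n_R}\Bigr)^2 (1+o(1)),
\end{equation*}
using $\sum_{u_1 \in L} w_{u_1} = n_L M_{L1}$ and absorbing the diagonal term $\sum_{u_1} p_{uu_1}^2$ (smaller by a factor of $n_L$) into the $o(1)$. Plugging in the leading $\tfrac{M_{R1} M_{R3}}{M_{R2}^2 w_u}$ factor of \Cref{lem:triangle_prob} and summing over $u$ produces $\sum_u w_u = n_L M_{L1}$ in the numerator. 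The denominator splits into two pieces from the two terms of \Cref{lem:wedge_prob}: the $1$-term scales like $w_u^2$ and contributes $\sum_u w_u^2 = n_L M_{L2}$, while the $\tfrac{M_{R1} M_{R3}}{M_{R2}^2 w_u}$-term scales like $w_u$ and contributes $\sum_u w_u = n_L M_{L1}$. Factoring out the common prefactor $\tfrac{n_L^3 M_{L1}^2}{M_{R1}^4 n_R^2}$ from numerator and denominator, the ratio collapses to
\begin{equation*}
C_G \;=\; \frac{M_{R1} M_{R3} M_{L1}}{M_{R2}^2 M_{L2} + M_{R1} M_{R3} M_{L1}} + o(1),
\end{equation*}
and dividing through by $M_{R1} M_{R3} M_{L1}$ yields the claimed expression.

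The main obstacle is controlling error terms uniformly across $\Theta(n_L^3)$ triples. The multiplicative $(1+o(1))$ factors in \Cref{lem:wedge_prob,lem:triangle_prob} must be uniform in $(w_u, w_{u_1}, w_{u_2})$, which is precisely why the strengthened assumption $\delta > \tfrac{1}{10}$ in \Cref{assm:WeightSeqMoments} is inherited, as in \Cref{thm:lccf_wt}. The additive $o(p_{u u_1} p_{u u_2})$ error in the triangle probability, when summed, is dominated by the dominant triangle term (which carries an extra $\tfrac{M_{R1}M_{R3}}{M_{R2}^2 w_u}$ factor) because $w_u = O(n_R^{1/2-\delta})$ and $M_{R2}^2/(M_{R1}M_{R3}) = \Theta(1)$ under the assumption; thus the additive error only contributes to the trailing $o(1)$. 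As a final remark, since the global closure coefficient equals the global clustering coefficient (observed after \Cref{def:closure}), the same formula serves for $H_G$.
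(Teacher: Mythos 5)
Your proposal is correct and follows essentially the same route as the paper: both reduce $C_G$ to the ratio of the summed triangle probabilities (\Cref{lem:triangle_prob}) to the summed wedge probabilities (\Cref{lem:wedge_prob}) over all triples, with the sums collapsing to $n_L M_{L1}$ and $n_L M_{L2} + \frac{M_{R1}M_{R3}}{M_{R2}^2} n_L M_{L1}$ respectively; the paper merely phrases this as a weighted average of \Cref{lem:wedge_close_prob} against the probability that $u$ is the wedge center, which is algebraically identical, and adds a short argument that at least one wedge exists with overwhelming probability so the conditioning is well posed. One small imprecision: your claim that the summed additive $o(p_{uu_1}p_{uu_2})$ error is dominated by the leading triangle term is not needed and not quite right when $M_{L2} \gg M_{L1}$, but the error is still $o(1)$ after dividing by the denominator (which contains the $M_{L2}$ term), so the stated conclusion stands.
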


\begin{proof}
Let $\cW$ be the set of wedges in $G$ and $\cT$ be the set of triangles. 
We first show that the global clustering coefficient is always well-defined, i.e. $\prob{|\cW| \geq 1} \geq 1 - \exp(-O(n_R))$. 
We show that with high probability, some node on the right partition has degree at least 3. 
This implies that a triangle exists in the graph and therefore a wedge exists. 
For any given node $v$ on the right, its expected degree is $w_v$ by \Cref{thm:degree_bip} 
and the degrees follow a Poisson distribution. 
By standard concentration bounds~\cite{cannone17}, $\prob{d_b(u) \leq 2} \leq \exp(-w_v/4)$ 
for $w_v$ larger than 2 (in particular, this probability is less than $1/3$ when $w_v > 7$). 
Thus, given the mild assumption that the weights have finite support for weights larger than $7$, 
the probability that there exists at least 1 triangle is $1 - \left(\frac{1}{3}\right)^{O(n_R)}$.

Next, we note that the probabilities computed in \Cref{lem:wedge_close_prob} remain unchanged 
when conditioned on the fact that at least one wedge exists. 
Let $E$ be the event that some wedge $(u, u_0, u_1)$ closes into a triangle (with $u$ as the centre of the wedge).
\[
\prob{E \cap |\cW| \geq 1} \geq \prob{E} - (1 - \prob{|\cW| \geq 1}) = \prob{E} - \left(\frac{1}{3}\right)^{O(n_R)},
\]
Consequently, 
\[
\prob{E} - \left(\frac{1}{3}\right)^{O(n_R)} \leq \prob{E \given |\cW| \geq 1} \leq \prob{E} + \left(\frac{1}{3}\right)^{O(n_R)}.
\]
Finally, $\prob{E} = \Omega(n_R^{O(1)})$ for any of the events we previously considered, so the exponentially small deviation does not produce any additional error in our results.

For any node $u$, the probability that a random wedge has center $u$ is proportional to the number of wedges centred at $u$. By our reasoning above, we can assume at least 1 wedge exists, so these probabilities sum to 1. By \Cref{lem:wedge_close_prob}, we have:
\begin{equation*}
\prob{\text{$u$ is the center node}} = \frac{\sum_{b,c \in L} \left( 1 +  \frac{M_{R1} M_{R3}}{M_{R2}^2} \cdot \frac{1}{w_u} \right)  \cdot p_{u b} p_{u c}}{\sum_{a,b,c \in L} \left( 1 +  \frac{M_{R1} M_{R3}}{M_{R2}^2} \cdot \frac{1}{w_a} \right)p_{a b} p_{a c}} + o(1).
\end{equation*}
Putting everything together,
\begin{equation*}
C_G = \sum_{u \in L} \prob{(u, u_1, u_2) \in \cT \given (u, u_1, u_2) \in \cW} \cdot \prob{\text{$u$ is the center}} 
= \frac{1}{1 + \frac{M_{R2} ^2}{M_{R3} M_{R1}} \cdot \frac{M_{L2}}{M_{L1}}} + o(1),
\end{equation*}
where the probability is taken over all $u_1, u_2 \in L$ and the second equality uses \Cref{lem:wedge_close_prob} for the probability that $(u, u_1, u_2) \in \cT$.
\end{proof}

\begin{figure}[t]
\hskip 20pt
\includegraphics[width=0.46\textwidth]{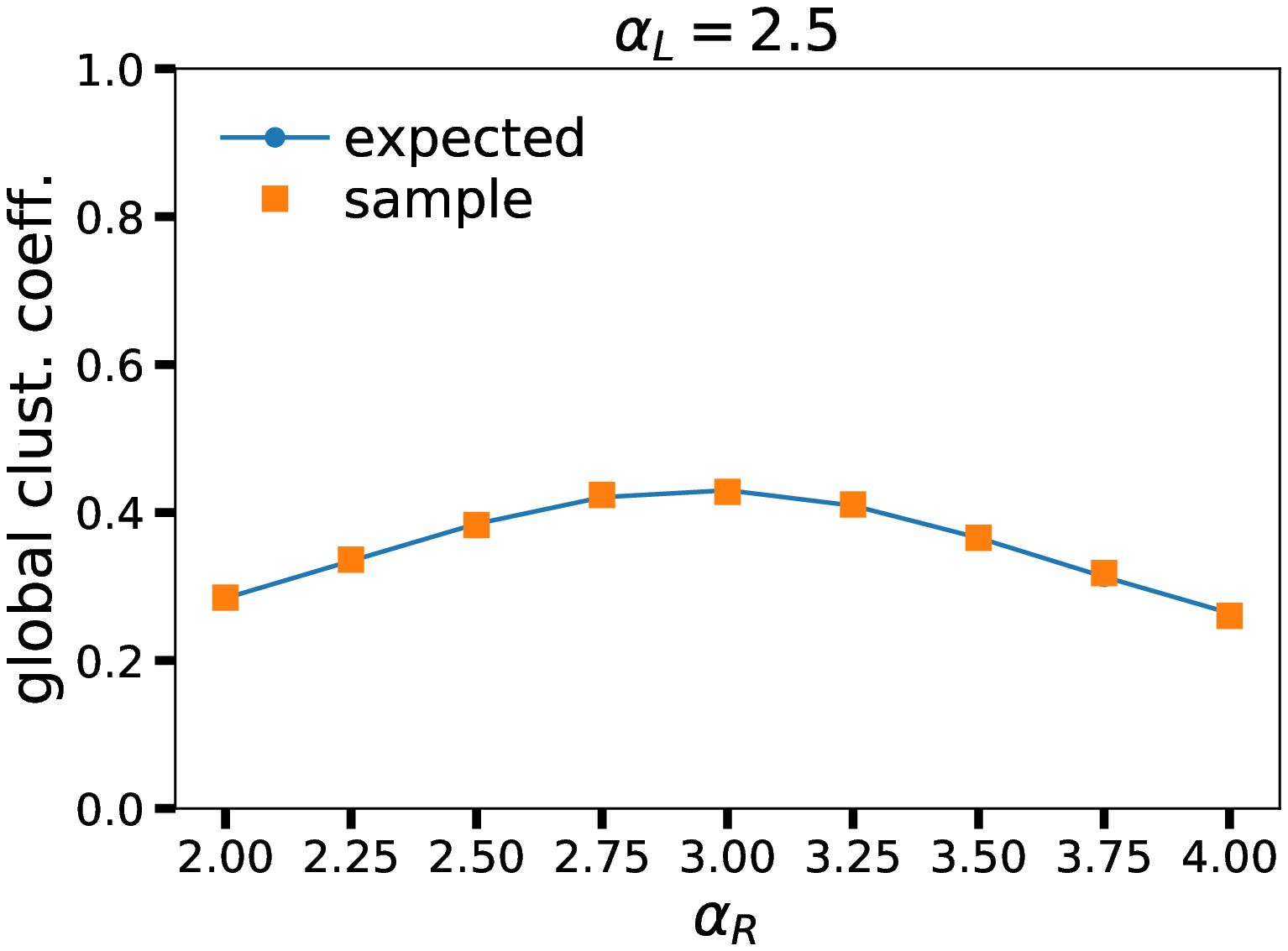}
\hfill
\includegraphics[width=0.46\textwidth]{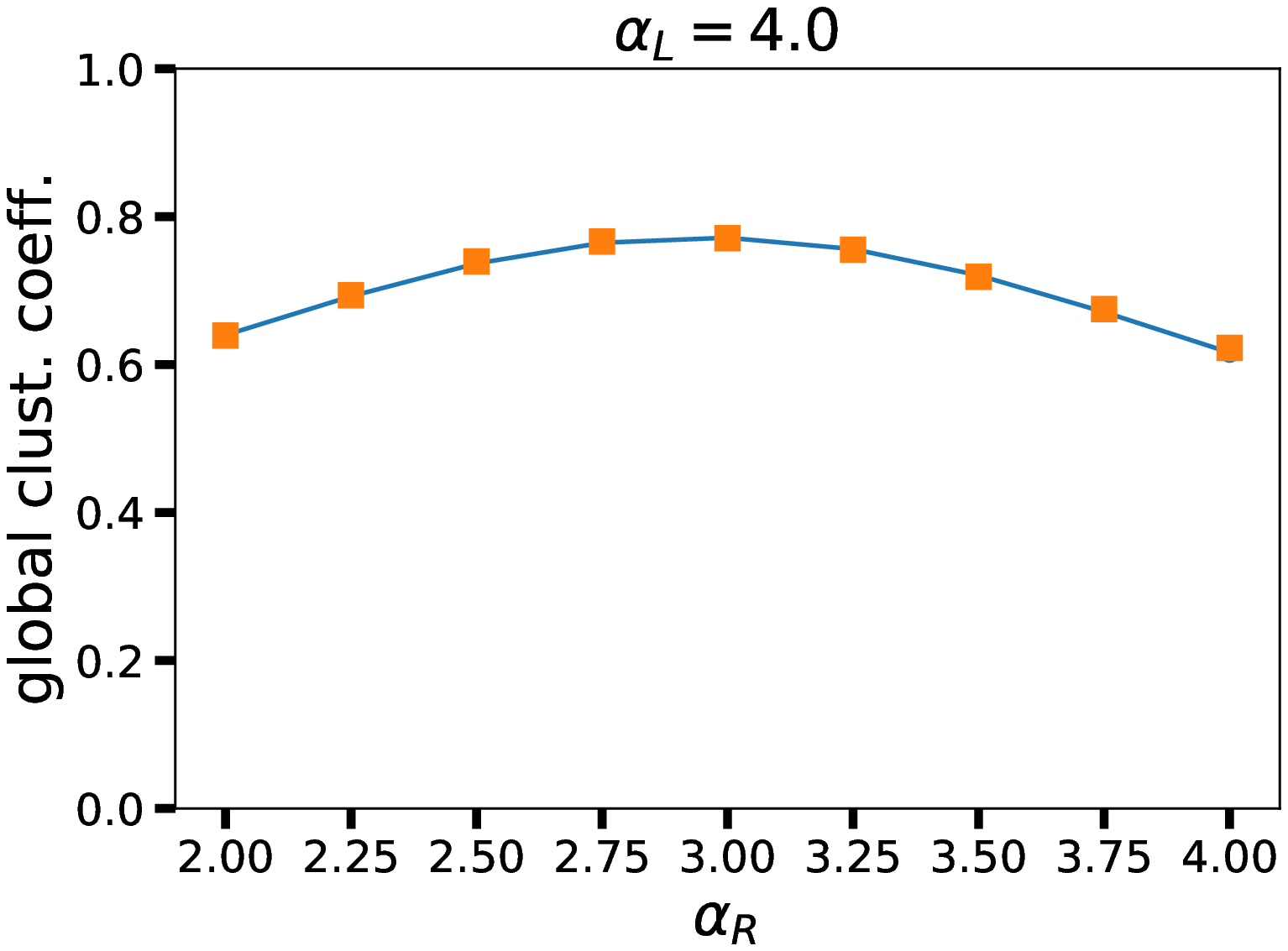}
\hskip 20pt
\caption{Expected (via \cref{thm:gccf}) and sampled global clustering coefficients on simulated graphs with discrete power law weight distributions on the left and right nodes with decay rights $\alpha_L$ and $\alpha_R$.
The samples are close to the expected value.
\label{fig:global_cluster}}
\end{figure}

\Cref{fig:global_cluster} shows the expected (computed from \cref{thm:gccf}) 
and actual global clustering coefficient of the projected graph 
with $n_L = n_R =$ 1,000,000.
The weights are drawn from a discrete power law distribution with 
fixed decay rate $\alpha_L = 2.5$ or $4.0$ on the left nodes,
varying decay rate $\alpha_R$ on the right nodes, and $w_{\max} = n_L^{0.5}$.
The sampled global clustering coefficients are close to the expectation
at all parameter values.

Finally, we investigate the local closure coefficient $H(u)$.
Analysis under the configuration model predicts that $H(u)$ should be proportional
to the node degree, while empirical analysis demonstrates a much slower increasing trend
versus degree, or even a constant relationship in a coauthorship network that is directedly 
generated from the bipartite graph projection~\cite{yin2019local}.
The following result theoretically justify this phenomenon, showing the the expected
value of local closure coefficient is independent from node weight
\begin{theorem}\label{thm:closure}
If \Cref{assm:WeightSeqMoments} is satisfied with $\delta > \frac{1}{10}$,
then conditioned on $S_L$ and $S_R$ we have, in the projected graph,
\[
H(u) = \frac{1}{1 + \frac{M_{R2} ^2}{M_{R3} M_{R1}} \cdot \frac{M_{L2}}{M_{L1}}}
 + o(1)
\]
as $n_R \to \infty$, \ie, the expected closure coefficient is asymptotically independent of node weight.
\end{theorem}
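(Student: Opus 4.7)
The plan is to mimic the derivation of the local clustering coefficient, but with the wedge reoriented so that the \emph{middle} node of the 2-path, rather than the endpoint $u$, plays the role of the wedge center. After expressing $H(u)$ as a ratio of expected triangle counts to expected 2-path counts through $u$, the asymptotic formulas from \Cref{lem:wedge_prob} and \Cref{lem:triangle_prob} should make the $w_u$ dependence factor out.

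Concretely, I would first write
\[
H(u) = \frac{\sum_{v, w \in L} \prob{(u,v),(v,w),(u,w) \in E \mid S_L, S_R}}{\sum_{v, w \in L} \prob{(u,v),(v,w) \in E \mid S_L, S_R}} + o(1),
\]
where the sums are over pairs of distinct nodes $v, w \neq u$. This is the natural conditional-probability encoding of ``pick a uniformly random 2-path emanating from $u$, then ask whether it closes.'' (The $o(1)$ absorbs negligible contributions from coincident indices and from conditioning on the existence of at least one such 2-path, which can be handled by the same concentration argument used in the proof of \Cref{thm:gccf}.)

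Next, I would apply \Cref{lem:wedge_prob} and \Cref{lem:triangle_prob} to the triple $(u, v, w)$ regarded as a wedge \emph{centered at} $v$. This yields, up to $(1+o(1))$ factors,
\[
\prob{(u,v),(v,w) \in E} \sim \Bigl(1 + \tfrac{M_{R1}M_{R3}}{M_{R2}^2} \cdot \tfrac{1}{w_v}\Bigr) p_{uv} p_{vw}, \qquad \prob{(u,v),(v,w),(u,w) \in E} \sim \tfrac{M_{R1}M_{R3}}{M_{R2}^2} \cdot \tfrac{1}{w_v} \, p_{uv} p_{vw}.
\]
Substituting $p_{uv} p_{vw} = \frac{M_{R2}^2}{M_{R1}^4 n_R^2} \, w_u w_v^2 w_w$ and carrying out the sums over $v$ and $w$ produces $\sum_v w_v \sim n_L M_{L1}$ in the numerator and $\sum_v w_v^2 \sim n_L M_{L2}$ in the denominator; the $w_u$ factor appears once in every term and cancels in the ratio.

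The only real obstacle is bookkeeping the error terms and justifying the interchange of the average over 2-paths with the asymptotic equivalences from the lemmas: one must verify that the $o(1)$ relative errors in \Cref{lem:wedge_prob,lem:triangle_prob} remain $o(1)$ after summing $\Theta(n_L^2)$ terms and dividing, which follows because the errors are uniform in $v, w$ under \Cref{assm:WeightSeqMoments} with $\delta > 1/10$. Once the arithmetic is done, the ratio collapses to
\[
H(u) = \frac{1}{1 + \frac{M_{R2}^2}{M_{R3} M_{R1}} \cdot \frac{M_{L2}}{M_{L1}}} + o(1),
\]
which is independent of $w_u$ and coincides with the global clustering coefficient from \Cref{thm:gccf}, as claimed.
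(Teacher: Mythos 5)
Your proof is correct, and the computation checks out: with $p_{uv}p_{vw} \propto w_u w_v^2 w_w$, the extra $1/w_v$ in the triangle probability turns the numerator's $v$-sum into $n_L M_{L1}$ while the denominator's $v$-sum becomes $n_L M_{L2} + \frac{M_{R1}M_{R3}}{M_{R2}^2} n_L M_{L1}$, the common factor $w_u \sum_w w_w$ cancels, and the ratio collapses to the stated constant. The paper reaches the same conclusion by a much terser route: it observes (via \Cref{thm:lccf_wt}, really via \Cref{lem:wedge_close_prob}) that the closure probability of a 2-path $(u,v,w)$ depends only on the center $v$, concludes that $H(u)$ is therefore independent of the head $u$, and hence equals the global closure coefficient, which equals the global clustering coefficient already computed in \Cref{thm:gccf}. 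Your direct summation is essentially an unpacking of that reduction, and it actually fills in a step the paper leaves implicit: independence of the \emph{per-path} closure probability from $u$ does not by itself give independence of the \emph{averaged} quantity $H(u)$ from $u$ unless the distribution of centers $v$ among 2-paths emanating from $u$ is asymptotically the same for every $u$; your factorization $p_{uv}p_{vw}\bigl(1 + \tfrac{M_{R1}M_{R3}}{M_{R2}^2 w_v}\bigr)$ shows exactly why the $u$-dependence factors out of that weighting. So your argument is slightly longer but more self-contained; the paper's buys brevity by reusing \Cref{thm:gccf}.
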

\begin{proof}
By \Cref{thm:lccf_wt}, the probability that a length-2 path $(u, v, w)$ closes into a triangle only depends on its center node $v$. Since the closure coefficient is measured from the head node $u$, the probability that any wedge is closed is independent of $u$ and thus the same across every node in the graph. This implies that the local closure coefficient is equal to the global closure coefficient, which in turn is equal to the global clustering coefficient.
\end{proof}

\begin{figure}[t]
\includegraphics[width=0.49\textwidth]{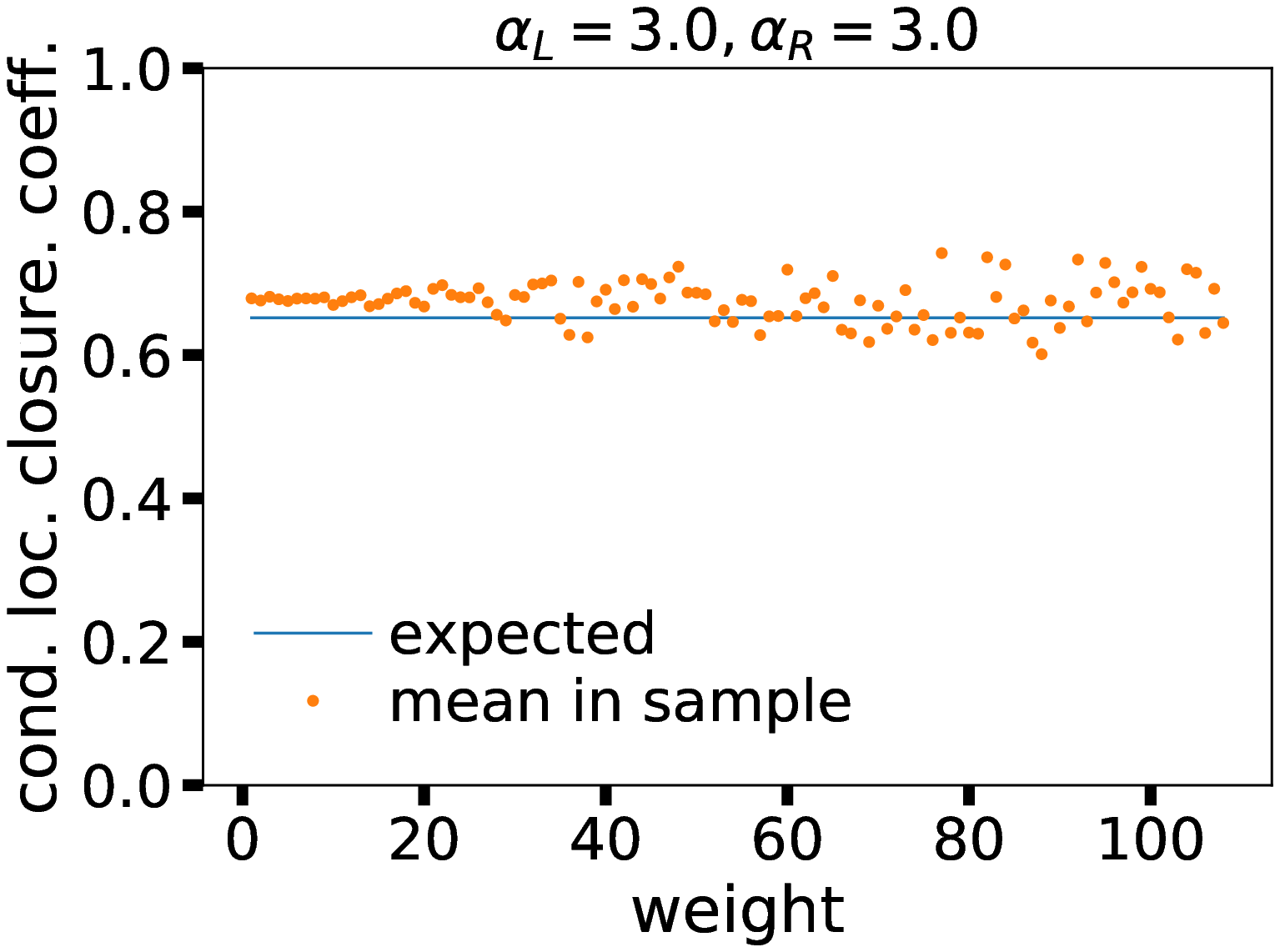}
\includegraphics[width=0.49\textwidth]{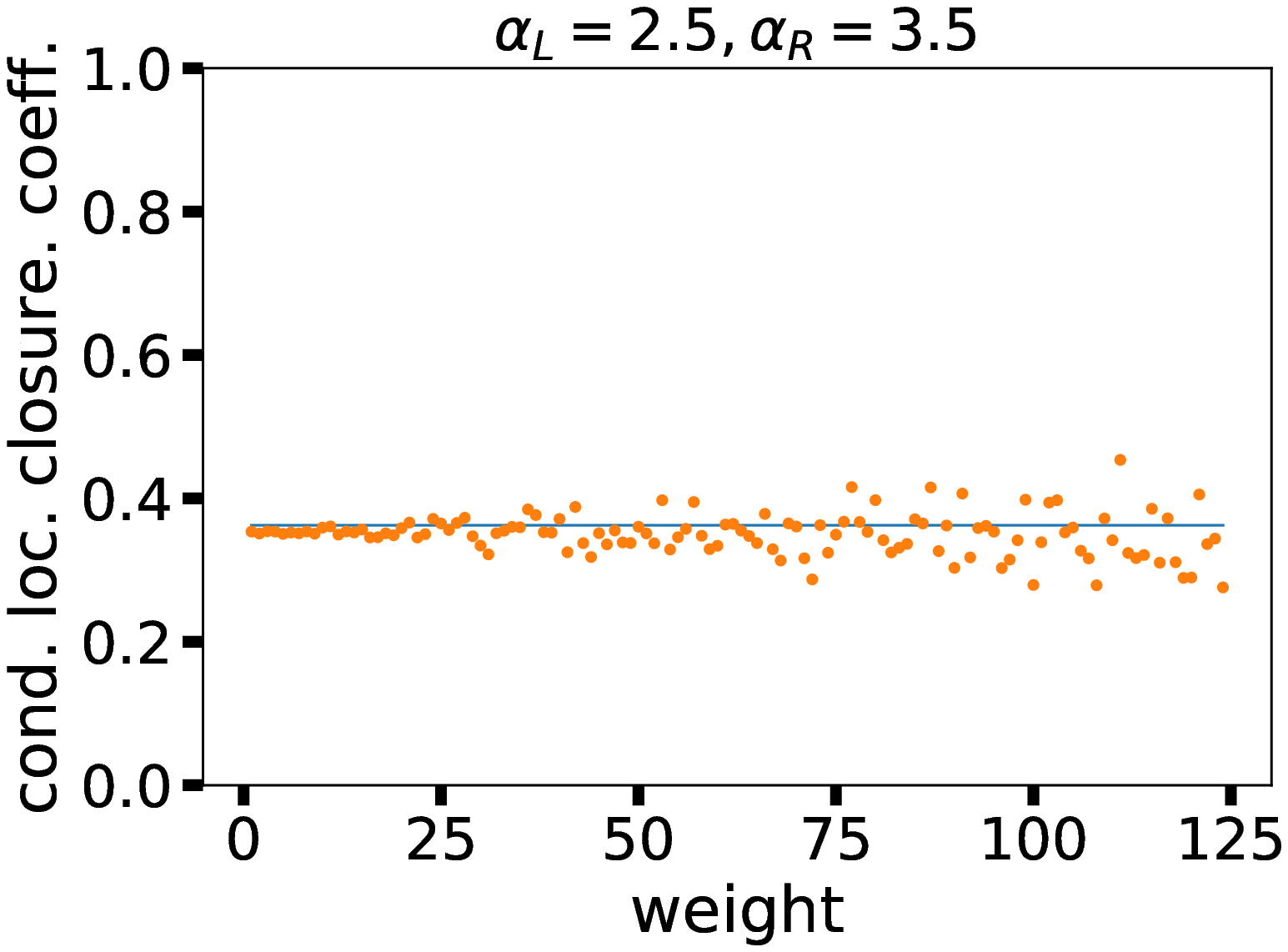}
\caption{Conditional local closure coefficient distribution on simulated graphs as a function of node weight $w_u$,
where left and right node weights are sampled from a discrete power law distribution with decay rates $\alpha_L$ and $\alpha_R$.
The dots are the mean conditional local closure coefficients for all nodes with that weight,
and the flat curve is the prediction from \cref{thm:closure}.
Weights with fewer than 5 nodes were omitted. \label{fig:local_closure}
}
\end{figure}

\Cref{fig:local_closure} shows the local closure coefficient of the projected graph as a function of node weights $w_u$,
using the same random graphs as for the clustering coefficient in \cref{fig:local_cluster}.
We observe that the mean local conditional closure coefficient is independent of the node weight in the samples,
which verifies \Cref{thm:closure}.

\begin{remark}
One can strengthen the error bounds in \cref{thm:lccf_wt,thm:gccf,thm:closure} by assuming $\delta > 1/6$. 
In particular, instead of an additive $o(1)$ error term, the error terms are a multiplicative $1 + o(1)$ factor. 
For example, the global clustering  coefficient in \Cref{thm:gccf} would be 
\[
C_G = \frac{1}{1 + \frac{M_{R2} ^2}{M_{R3} M_{R1}} \cdot \frac{M_{L2}}{M_{L1}}}(1+o(1)).
\]
\end{remark}


\section{Fast sampling and counting}\label{sec:fast_sampling}
We develop a fast sampling algorithm for
graphs with degrees following discrete power law (Zipfian) distributions, which
we use in all of our experiments.
One naive way to implement our model is to simply iterate over all $O(n_L n_R)$
potential edges, and generate a random sample for each edge. For large graphs
however, this quadratic scaling is too costly.
In contrast, our algorithm has running time linear in the number of sampled edges rather than
the product of the left and right partition sizes. This speedup is enabled by
the discrete power law distributions, which allows us to group nodes with the same weight.
The overall procedure is in \cref{alg:FastGeneration}. 

Suppose that we have two discrete power law distributions $D_L$ and $D_R$, with
$n_L\expect{D_L} = n_R\expect{D_R}$ and decay parameters $\alpha_L$ and
$\alpha_R$.  We begin by first sampling the node weights $w_u \in \mathbb{N}$
according to the specified distributions.  We then group together nodes on each
side of the bipartite graph by their weight.  With high probability, the number
of groups will be small (\cref{lem:NumGroups}).  Thus, instead of iterating over
all $O(n_L n_R)$ pairs of potential edges, we can iterate over all pairs of
groups between the left and right partition. Within each group, edges between
nodes of the group occur with a fixed probability. Hence, the number of edges
within the group follows a binomial distribution.  The final step simply
generates the number of edges $e_g$ we need from each group, and then draws that
many edges from the node pairs within that group, which can be done in linear
time.

\begin{algorithm}[t]
\caption{Fast sampling of a Chung-Lu bipartite graph with discrete power-law weights.}
\label{alg:FastGeneration}
\begin{algorithmic}
\STATE{\textbf{Input:} positive integers $n_L$, $n_R$, and degree distributions $D_L$ and $D_R$}
\STATE{\textbf{Output:} a bipartite graph $G$ following degree distributions $D_L$ and $D_R$}
\STATE{$L \gets \{1, 2, \ldots, n\}$, $R \gets \{n+1, n+2, \dots, n+n_R\}$}
\STATE{$W_L \gets \{ w_u \,|\, w_u \sim D_L\}$, $W_R \gets \{ w_u \,|\, w_u \sim D_R\}$}
\STATE{$G \gets \textrm{an empty graph with node set } L \sqcup R$}
\FOR{each unique value $(w_l, w_r) \in W_L \times W_R$}
	\STATE{$V_L \gets \{u \in L \,|\, w_u = w_l \}$, $V_R \gets \{u \in R \,|\, w_u = w_r \}$}
	\STATE{$m \gets |V_L||V_R|$, $p = \frac{w_l w_r}{n_R \mu_R}$}
	\STATE{$e_g \sim \text{Binomial}(m, p)$}
	\STATE{draw $e_g$ uniformly from $V_L \times V_R$ without replacement and add them to $G$}
\ENDFOR
\RETURN $G$
\end{algorithmic}
\end{algorithm}

\hide{
\begin{proof}
The probability of any given sample taking one of the top $t$ most probable values in $D$ is 
\begin{align}
p_t = \frac{\sum_{n=1}^t \frac{1}{n^\alpha}}{\sum_{n=1}^N \frac{1}{n^\alpha}}
= 1 - \frac{\sum_{n=t+1}^N \frac{1}{n^\alpha}}{\sum_{n=1}^N \frac{1}{n^\alpha}} 
& \geq 1 - \frac{\int_{t}^{N} x^{-\alpha}\textrm{d}x}{\int_1^{N+1} x^{-\alpha}\textrm{d}x} \\
& = 1 - \frac{N^{-\alpha + 1} - t^{-\alpha + 1}}{(N + 1)^{-\alpha + 1} - 1} 
 = 1 - O(t^{-\alpha+1}).
\end{align}
With probability $1 - O(st^{-\alpha+1})$, any collection of $s$ samples from $D$ will come from the first $t$ values. When $t = O(s^{\frac{1}{\alpha-1}})$ the probability is arbitrarily close to 1 for large $N$.
\end{proof}
}

\begin{theorem}
Let $\mu_L = \expect{D_L}$ and $\mu_R = \expect{D_R}$. The expected running time of \cref{alg:FastGeneration} 
is $O\left(n_L^{1 / (\alpha_L-1)} n_R^{1 / (\alpha_R-1)} + \mu_L n_L\right)$. For $\alpha_L, \alpha_R > 3$, the latter term dominates and the algorithm is asymptotically optimal since the second term is the expected number of edges.
\end{theorem}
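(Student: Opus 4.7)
The plan is to decompose the running time of \Cref{alg:FastGeneration} into three parts and bound each separately: (i) the initial weight sampling and bucketing of nodes by integer weight value, (ii) the outer loop that iterates over pairs of distinct weight values, and (iii) the drawing of edges within each group pair. Parts (i) and (iii) together contribute the $\mu_L n_L$ term, and part (ii) contributes the $n_L^{1/(\alpha_L-1)} n_R^{1/(\alpha_R-1)}$ term.

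For part (i), sampling $n_L + n_R$ weights and grouping them by integer value takes $O(n_L + n_R)$ time via a counting sort on the observed weights; this is absorbed into $\mu_L n_L$ because $\mu_L \geq 1$ and, by \Cref{rem:equal-degrees}, $n_R = n_L \mu_L / \mu_R = \Theta(n_L)$ when the weight distributions have bounded mean. For part (iii), each drawn edge can be returned in expected $O(1)$ time (using, e.g., a uniform index into $V_L \times V_R$ together with a hash set to reject duplicates), so its cost equals, up to constants, the expected total number of edges. By linearity of expectation and independence of $S_L$ and $S_R$,
\[
\expect{\sum_{(w_l,w_r)} |V_L|\,|V_R|\,\tfrac{w_l w_r}{n_R \mu_R}}
= \expect{\tfrac{\bigl(\sum_{u\in L} w_u\bigr)\bigl(\sum_{v\in R} w_v\bigr)}{n_R \mu_R}}
= \tfrac{n_L \mu_L \cdot n_R \mu_R}{n_R \mu_R}
= n_L \mu_L.
\]

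For part (ii), the number of outer-loop iterations equals the number of distinct pairs $(w_l, w_r)$ that actually appear, which is at most the product of the number of distinct left weights and the number of distinct right weights. Since the weights are positive integers with $w_{\min} = 1$, each of these counts is bounded above by the maximum observed weight on that side, and \Cref{lem:PLawMax} bounds the expected maxima by $n_L^{1/(\alpha_L-1)}$ and $n_R^{1/(\alpha_R-1)}$ respectively. Because the left and right weight sequences are drawn independently, the expected product is the product of expectations, and each iteration performs $O(1)$ work (computing $m$, $p$, and a single binomial sample via a standard constant-expected-time method).

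The main subtlety I anticipate is that \Cref{lem:PLawMax} is stated for $w_{\max} = \infty$, while the algorithm in practice may impose a hard cap on the weight; since capping can only decrease the observed maximum, the same asymptotic bound still applies. A secondary point is verifying the optimality claim when $\alpha_L, \alpha_R > 3$: the two exponents satisfy $1/(\alpha_L - 1) + 1/(\alpha_R - 1) < 1$, so combined with $n_R = \Theta(n_L)$ the iteration cost is $o(n_L) = o(\mu_L n_L)$, and the edge-drawing cost $n_L \mu_L$ dominates. Since any algorithm that outputs the edges must use time at least linear in their number, this gives asymptotic optimality up to constant factors.
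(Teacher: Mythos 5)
Your proposal is correct and follows essentially the same route as the paper's proof: bound the number of outer-loop iterations by the expected product of distinct weight counts via \Cref{lem:NumGroups}, charge the edge-drawing cost to the expected edge count $\sum_{u,v} w_u w_v/(n_R\mu_R) = n_L\mu_L$, and then argue the second term dominates for $\alpha_L,\alpha_R>3$. The only cosmetic difference is in that last step, where the paper invokes the normalization $\mu_L n_L=\mu_R n_R$ together with AM--GM to get $\sqrt{n_Ln_R}\le \mu_Ln_L$ (avoiding your extra assumption that $n_R=\Theta(n_L)$), whereas you sum the exponents $1/(\alpha_L-1)+1/(\alpha_R-1)<1$ directly; you also make explicit a couple of steps the paper leaves implicit (distinct integer weights bounded by the maximum, and independence of the two sides when taking the expectation of the product).
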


\begin{proof}
By \Cref{lem:NumGroups}, the $\expect{|W_L|}$ and $\expect{|W_R|}$ are $O\left(n_L^{1 / (\alpha_L-1)}\right)$ and $O\left(n_R^{1 / (\alpha_R-1)}\right)$. 
Thus, the number of unique pairs $(w_u, w_v)$ iterated over in the for loop of \cref{alg:FastGeneration} is $O\left(n_L^{1 / (\alpha_L-1)} n_R^{1 / (\alpha_R-1)}\right)$ in expectation. Aside from the time taken to draw $e_g$ edges, each group takes constant time to process. The expected number of edges added over all the groups is $\sum_{u\in L, v\in R} \frac{w_u w_v}{n_R M_{1R}} = n_L M_{1L}$. This tends to $O(n_L \mu_L)$ in expectation. Hence the total running time is upper bounded by $O\left(n_L^{1 / (\alpha_L-1)} n_R^{1 / (\alpha_R-1)} + \mu_L n_L\right)$. 

Following \Cref{rem:equal-degrees}, we may assume without loss of generality that $\mu_L n_L = \mu_R n_R$. By the AM-GM inequality, $\frac{\mu_L n_L + \mu_R n_R}{2} \geq \sqrt{n_L n_R \mu_L \mu_R}$. For $\alpha_L, \alpha_R \geq 3$, the latter term dominates and the runtime is bounded by the expected number of generated edges. Since the output size is at least $O(\mu_L n_L)$, \Cref{alg:FastGeneration} is asymtotically optimal when $\alpha_L, \alpha_R \geq 3$.
\end{proof}

Next, we analyze the complexity of computing the graph projection along with several of the network statistics we have considered.
\begin{lemma}
Let $D_L$ and $D_R$ with decay parameters $\alpha_L$ and $\alpha_R$ be the weight distributions.
In expectation, the running time for computing all local clustering and closure coefficients and the global clustering coefficient is $O\left(n_L^{1/\min(\alpha_L, \alpha_R-1)} \frac{n_L^2 M_{1L}^2 M_{R2}}{n_R M_{1R}^2}\right)$. Under the normalization in \Cref{rem:equal-degrees}, this is equal to $O\left(n_L^{1 / (\alpha_L-1)} n_R^{1 / (\alpha_R-1)} + \mu_L n_L + \mu_R n_R\right)$.  For $\alpha_L, \alpha_R > 3$, the algorithm is asymptotically optimal, since the second term is the expected number of edges.
\end{lemma}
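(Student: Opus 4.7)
The plan is to decompose the total work into three stages: sampling the bipartite graph (already analyzed in the previous theorem), materializing its projection, and enumerating the triangles at every vertex. Once a triangle count $T(u)$ and a degree $d(u)$ are known at each $u \in L$, every local clustering coefficient, local closure coefficient, and the global clustering coefficient follow in $O(n_L)$ additional arithmetic.

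For the projection stage I would iterate over each right node $v \in R$ and insert an undirected edge between every pair of its left neighbors into a hash set (to merge the multi-edges in the definition of the projection), costing $\sum_{v \in R} \binom{d_b(v)}{2}$ in expectation. By \Cref{thm:degree_bip} the bipartite degree of $v$ is asymptotically Poisson with mean $w_v$, so this sum is $O(n_R M_{R2})$, which under \Cref{assm:WeightSeqMoments} is dominated by the next stage. For the triangle-enumeration stage I would use the classical neighborhood-intersection triangle lister, whose cost is $O(m \cdot d_{\max})$, where $m$ denotes the number of edges in the projected graph and $d_{\max}$ its maximum degree.

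The two factors are controlled by earlier results. Summing the per-node expectation of \Cref{thm:deg_proj_mean} over $u \in L$ yields
\begin{equation*}
\expect{m} \;=\; \tfrac{1}{2} \sum_{u \in L} \expect{d(u) \mid S_L, S_R} \;=\; O\!\left(\frac{n_L^2 M_{L1}^2 M_{R2}}{n_R M_{R1}^2}\right).
\end{equation*}
By \Cref{thm:deg_proj_distri} the projected degree sequence follows a power law with exponent $\min(\alpha_L, \alpha_R - 1)$, and \Cref{lem:PLawMax} then gives $\expect{d_{\max}} = O\!\left(n_L^{1/\min(\alpha_L, \alpha_R - 1)}\right)$. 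Multiplying the two bounds and folding in the sampling cost of the preceding theorem yields the claimed expected running time.

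The main obstacle is that $m$ and $d_{\max}$ are not independent, so $\expect{m \cdot d_{\max}}$ need not factor into $\expect{m}\cdot\expect{d_{\max}}$. I would address this by first establishing that $d_{\max}$ concentrates sharply around its expectation via a union bound on the maxima of $n_L$ Poisson variables whose means are a power-law sample, and then conditioning on this high-probability event so that $\expect{m}$ may be factored out with only lower-order error. For the simplified sum form, I would use $M_{R2} = O(M_{R1}^2)$ from \Cref{assm:WeightSeqMoments} together with the normalization $n_L \expect{M_{L1}} = n_R \expect{M_{R1}}$ of \Cref{rem:equal-degrees} to collapse the edge-count factor to $O(\mu_L n_L + \mu_R n_R)$; asymptotic optimality for $\alpha_L, \alpha_R > 3$ then follows because the expected edge count is itself a trivial lower bound on any algorithm that outputs the projection, and in that regime the sampling overhead $n_L^{1/(\alpha_L-1)} n_R^{1/(\alpha_R-1)}$ is absorbed into $\mu_L n_L + \mu_R n_R$.
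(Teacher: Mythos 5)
Your decomposition — build the projection by expanding each right node's neighborhood, then list triangles, then aggregate into the three coefficients — is the same as the paper's, and your edge-count bound $\expect{m} = O\bigl(n_L^2 M_{L1}^2 M_{R2} / (n_R M_{R1}^2)\bigr)$ matches (your $O(n_R M_{R2})$ for the projection stage agrees with this under the normalization of \Cref{rem:equal-degrees}). The gap is in the triangle-enumeration stage. You bound it by $O(m \cdot d_{\max})$ and then invoke \Cref{lem:PLawMax} for $d_{\max}$; but for a power-law degree sequence with exponent $\alpha' = \min(\alpha_L, \alpha_R - 1)$, that lemma gives $\expect{d_{\max}} = n_L^{1/(\alpha'-1)}$, not the $n_L^{1/\alpha'}$ you wrote. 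So your route actually yields $O\bigl(m\, n_L^{1/(\alpha'-1)}\bigr)$, which is strictly weaker than the claimed $O\bigl(m\, n_L^{1/\alpha'}\bigr)$ whenever $\alpha' > 1$. The paper gets the stated exponent not from the naive neighborhood-intersection bound but from the specialized listing algorithm of Latapy~\cite{Latapy08}, which runs in $O(m\, n^{1/\alpha})$ time on power-law graphs with exponent $\alpha$ (by ordering vertices by degree and charging each intersection to the lower-ranked endpoint). Without that ingredient, the claimed running time does not follow.

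Two smaller remarks. First, your concern that $\expect{m \cdot d_{\max}}$ need not factor is legitimate and is a point the paper silently skips; your proposed fix via concentration of $d_{\max}$ is the right instinct, though as written it is only a sketch. Second, your optimality argument for $\alpha_L, \alpha_R > 3$ (edge count as a trivial lower bound, AM--GM to absorb the $n_L^{1/(\alpha_L-1)} n_R^{1/(\alpha_R-1)}$ term) matches the paper's reasoning and is fine.
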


\begin{proof}
To compute the projected graph, we can simply iterate over all nodes $u$ in the right partition. For each pair of nodes in $N(u)$, we connect the nodes with an edge in the projected graph. Summed over all nodes in the right partition, we add $\sum_{v\in R} \left(\frac{n_L M_{1L}}{n_R M_{1R}} w_v\right)^2$ edges in the projected graph on expectation. Hence both the expected time to compute the projection and the expected number of edges in the projection is upper bounded by $O\left(\frac{n_L^2 M_{1L}^2 M_{R2}}{n_R M_{1R}^2}\right)$.

To compute the local clustering and closure coefficients, as well as the global
clustering coefficient, it is sufficient to have the degree and triangle
participation counts of each node.  The degrees are immediately available from
the projected graph, and we can list all triangles in $O(mn^{1/\alpha})$ time,
where $m$ is the number of edges in the projection, and $\alpha$ is the power
law parameter of the projection~\cite{Latapy08}. By \Cref{thm:deg_proj_distri}
and our reasoning above, $m = O\left(\frac{n_L^2 M_{1L}^2 M_{R2}}{n_R M_{1R}^2}\right)$ and $\alpha = \min(\alpha_L, \alpha_R-1)$.

By \Cref{rem:equal-degrees}, it's convenient to interpret $D_L$ and $D_R$ as the degree distributions of the left and right partitions respectively. In these cases, we have the equality $n_L \expect{D_L} = n_R \expect{D_R}$. With this equality, our results above simplify. The running time of Algorithm~\ref{alg:FastGeneration} can be restated as $O\left(n_L^{1 / (\alpha_L-1)} n_R^{1 / (\alpha_R-1)} + \mu_L n_L + \mu_R n_R\right)$. Thus for $\alpha_L, \alpha_R > 3$, the latter terms dominate (by the AM-GM inequality) and the running time is asymtotically optimal, since it is bounded by the expected number of generated edges.
\end{proof}


\section{Numerical experiments}
\label{sec:experiment}

In this section, we use our model in conjunction with several datasets. 
We find that much of the empirical clustering behavior in real-world projections can be accounted for by our bipartite project model.
All algorithms and simulations were implemented in C++, and all experiments were executed on a dual-core Intel i7-7500U 2.7 GHz CPU with 16 GB of RAM. 
Code and data are available at \url{https://gitlab.com/paul.liu.ubc/bipartite-generation-model}.

We analyze 11 bipartite network datasets (\cref{tbl:statistics}).
For the weight sequences $S_L$ and $S_R$, we use the degrees from the data.
We also compare with a version of the random intersection model~\cite{bloznelis2013degree,godehardt2003two},
where the weight sequence of the left nodes comes from the data.
For each dataset, we estimated power-law decay parameters for the degree distribution of the left and right partition (\Cref{sec:power-law-stats}).

\Cref{tbl:clustering_experiments} shows clustering and closure coefficients --- mean local clustering (i.e., average clustering coefficient), global clustering (equal to global closure), and mean local closure (i.e., average closure coefficient) ---
from (1) the data, (2) the projected graph produced by our model, and (3) the graph produced by the random intersection model. 
When computing the coefficients, we ignore any node that has an undefined coefficient, and we report the empirical
(i.e., non-conditional) variants defined in \Cref{sec:preliminaries}.

\begin{table}[t]
\caption{Description and summary statistics of real world datasets. \label{tbl:statistics}}
\begin{center}
\begin{tabular}{r l l l p{5.4cm}}
\toprule
dataset & $\lvert L \rvert$ & $\lvert R \rvert $ & $\lvert E_b \rvert$ & projection description \\
\midrule
actors-movies~\cite{barabasi1999emergence} & 384K & 128K & 1.47M & \footnotesize{actors in the same movie} \\
amazon-products-pages~\cite{leskovec2007dynamics} & 721K & 549K & 2.34M & \footnotesize{products displayed on the same page on \texttt{amazon.com}} \\
classes-drugs~\cite{Benson-2018-simplicial} & 1.16K & 49.7K & 156K & \footnotesize{FDA NDC classification codes describing the same drug} \\
condmat-authors-papers~\cite{newman2001structure} & 16.7K & 22.0K & 58.6K & \footnotesize{academics co-authoring a paper on the Condensed Matter arXiv} \\
directors-boards~\cite{Seierstad-2011-boards} & 204 & 1.01K & 1.13K & \footnotesize{directors on the boards of the same Norwegian company} \\
diseases-genes~\cite{Goh-2007-diseasome} & 516 & 1.42K & 3.93K & \footnotesize{diseases associated with the same gene} \\
genes-diseases~\cite{Goh-2007-diseasome} & 1.42K & 516 & 3.93K & \footnotesize{genes associated with the same disease} \\
mathsx-tags-questions~\cite{Benson-2018-simplicial} & 1.63K & 822K & 1.80M  & \footnotesize{tags applied to the same question on \texttt{math.stackexchange.com}} \\
mo-questions-users~\cite{Veldt-2020-local} & 73.9K & 5.45K & 132K & \footnotesize{questions answered by the same user} \\
so-users-threads~\cite{Benson-2018-simplicial} & 2.68M & 11.3M & 25.6M & \footnotesize{users posting on the same question thread on \texttt{stackoverflow.com}} \\
walmart-items-trips~\cite{Amburg-2020-categorical} & 88.9K & 69.9K & 460K & \footnotesize{items co-purchased in a shopping trip} \\
\bottomrule
\end{tabular}
\end{center}
\end{table}

\begin{table}[tb]
\caption{Clustering and closure coefficients in real-world data and in random projections following our model
and the random intersection (RI) model. 
Variances are on the order of 0.001. 
A large amount clustering is simply explained by the degree distribution and projection.
}
\label{tbl:clustering_experiments}
\begin{center}
\begin{tabular}{r ccc ccc ccc}
\toprule
                 &  \multicolumn{3}{c}{mean clust.\ coeff.}                     &    \multicolumn{3}{c}{global clust.\ coeff.}    &   \multicolumn{3}{c}{mean closure coeff.}              \\ 
                 \cmidrule(lr){2-4}                 \cmidrule(lr){5-7}                 \cmidrule(lr){8-10}
dataset                        & data   & ours & RI        & data     & ours & RI     & data     & ours & RI \\ \midrule
actors-movies          & 0.78 & 0.63 & 0.58 & 0.17  & 0.07 & 0.04 & 0.20 & 0.04 & 0.03 \\
amazon-products-pages  & 0.74  & 0.52 & 0.53 & 0.20 & 0.08 & 0.08 & 0.29 & 0.09 & 0.09 \\
classes-drugs          & 0.83& 0.79 & 0.78 & 0.50  & 0.50 & 0.49  & 0.40 & 0.24 & 0.23 \\
condmat-authors-papers & 0.74 & 0.50 & 0.50 & 0.36  & 0.12 & 0.11 & 0.35 & 0.10 & 0.10 \\
directors-boards       & 0.45 & 0.28 & 0.34 & 0.39 & 0.21 & 0.23 & 0.27 & 0.17 & 0.19 \\
diseases-genes         & 0.82 & 0.46 & 0.36 & 0.63 & 0.31 & 0.19 & 0.52 & 0.21 & 0.14 \\
genes-diseases         & 0.86 & 0.65 & 0.57 & 0.66 & 0.37 & 0.23 & 0.54  & 0.24 & 0.19 \\
mathsx-tags-questions  & 0.63 & 0.79 & 0.80 & 0.33 & 0.46 & 0.47  & 0.17 & 0.25 & 0.27 \\
mo-questions-users     & 0.86 & 0.78 & 0.64 & 0.63 & 0.45 & 0.19 & 0.37 & 0.24 & 0.19 \\
so-users-threads       & 0.40 & 0.45 & 0.46 & 0.02 & 0.01 & 0.01 & 0.00 & 0.01 & 0.01 \\
walmart-items-trips    & 0.63 & 0.55   & 0.52 & 0.05 & 0.04  & 0.04 & 0.07 & 0.02 & 0.02 \\
\bottomrule
\end{tabular}
\end{center}
\end{table}

In all but one dataset, our model has mean local clustering that is closer to the data than the
random intersection model. 
This remains true regardless of whether our model has more clustering (e.g., \emph{mathsx-tags-questions}) 
or less clustering (e.g., \emph{actors-movies}) compared to the data.
The one exception is the \emph{directors-boards} dataset, where the random intersection model accounts for more clustering than our model.
In an absolute sense, a large amount of the mean clustering is created by the projection.

To further highlight how much is explained by our model, \Cref{fig:cc-vs-degree} 
shows the local clustering coefficient as a function of degree in the data and in a sample from the models.
We find that the empirical characteristics of the clustering coefficient as a function of degree are largely
explained by the projection, suggesting that there is little innate local clustering behavior beyond what the projection 
from the degree distribution already provides.

\begin{figure}[tb]
\begin{center}
\includegraphics[width=0.47\textwidth]{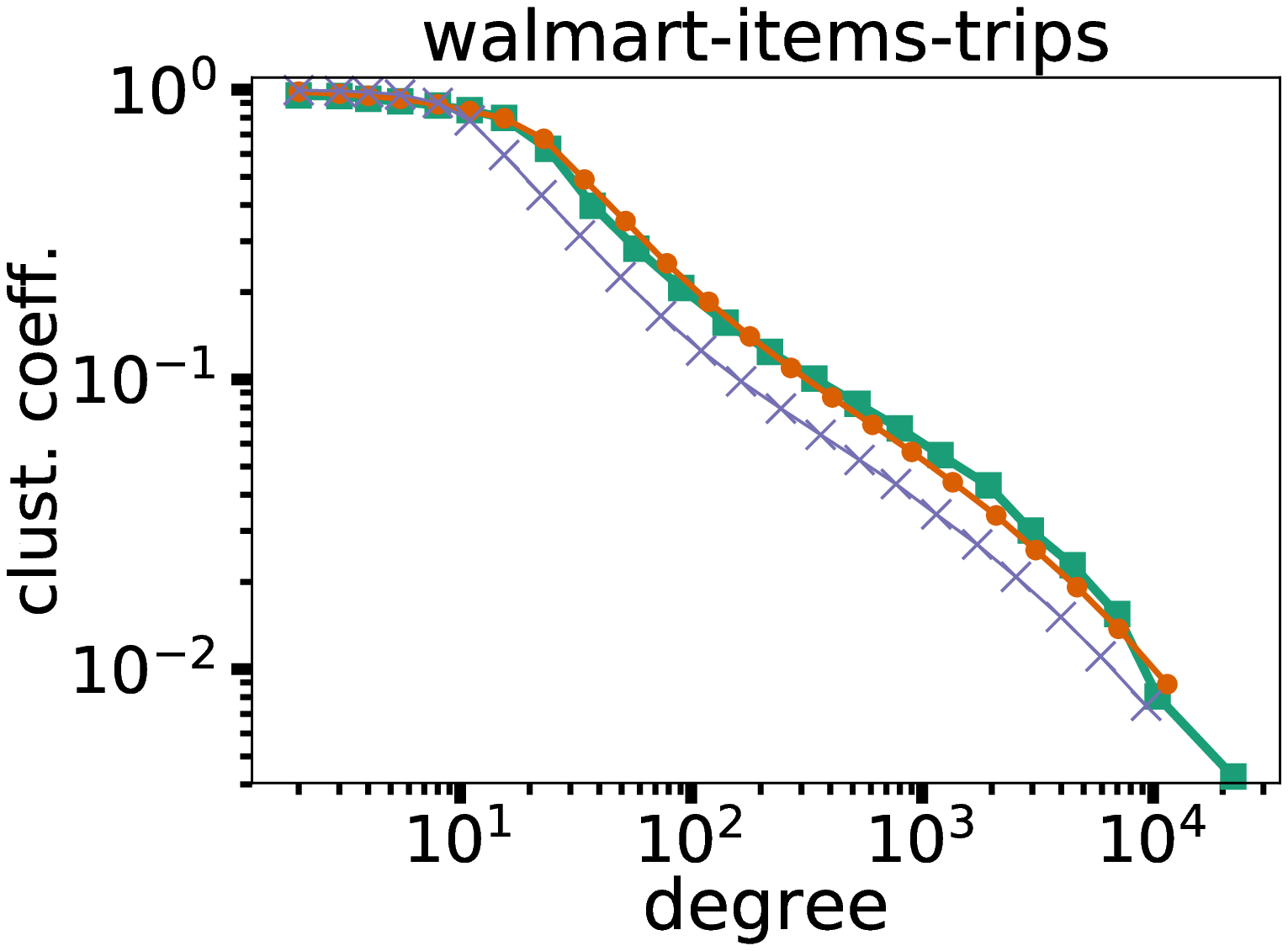}
\includegraphics[width=0.47\textwidth]{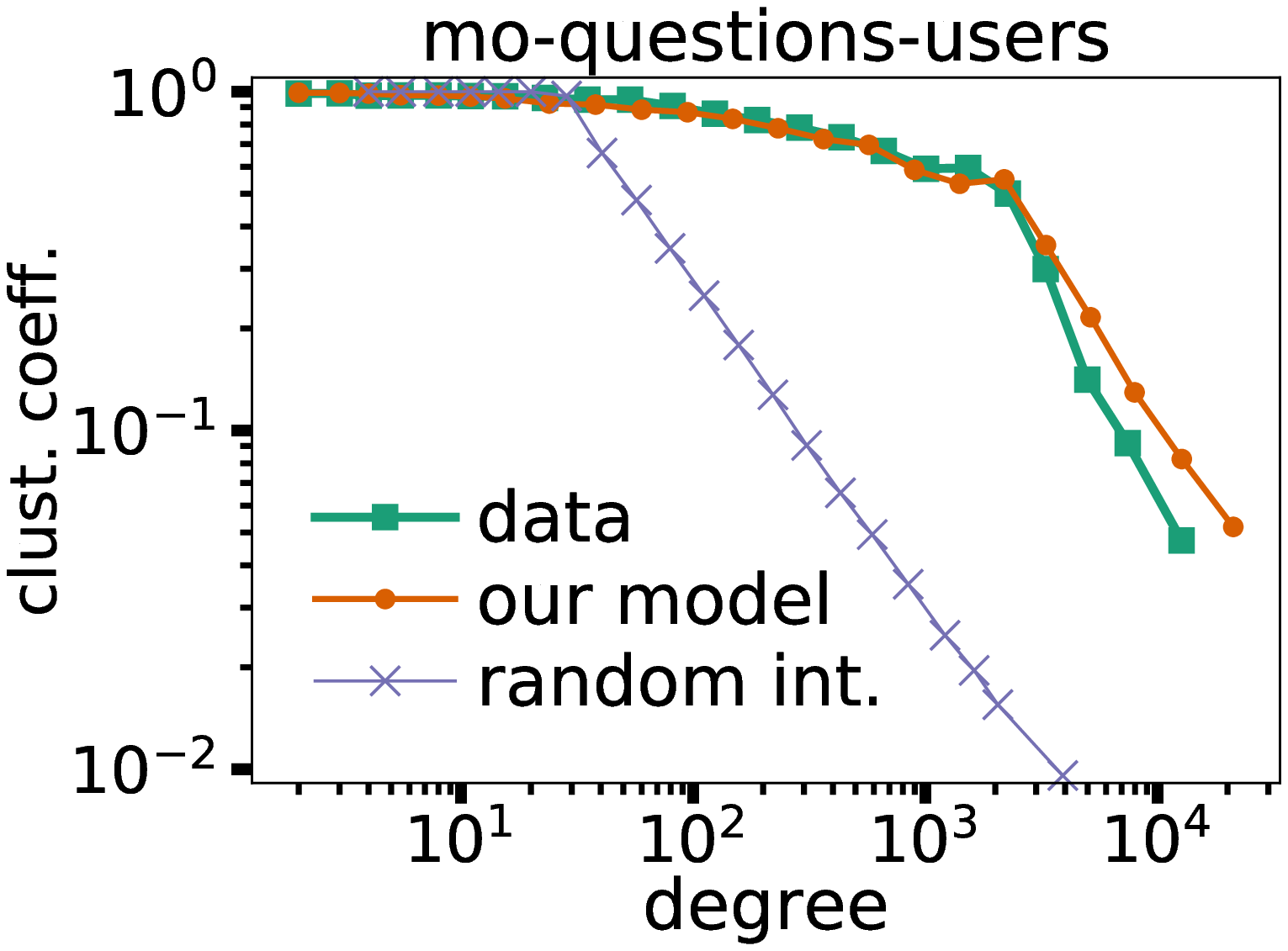}
\end{center}
\caption{Local clustering coefficient as a function of degree on the \emph{walmart-items-trips} (left) and \emph{mo-questions-users} (right) datasets. The \textcolor{mygreen}{green}, \textcolor{myorange}{orange}, and \textcolor{myblue}{blue} lines represent the clustering coefficients from the real projected graph, the projected graph produced by our model, and the projected graph produced by the random intersection model respectively. Much of the empirical local clustering behavior can be explained by the projection.
\label{fig:cc-vs-degree}}
\end{figure}

\begin{figure}[tb]
\begin{center}
\includegraphics[width=0.47\textwidth]{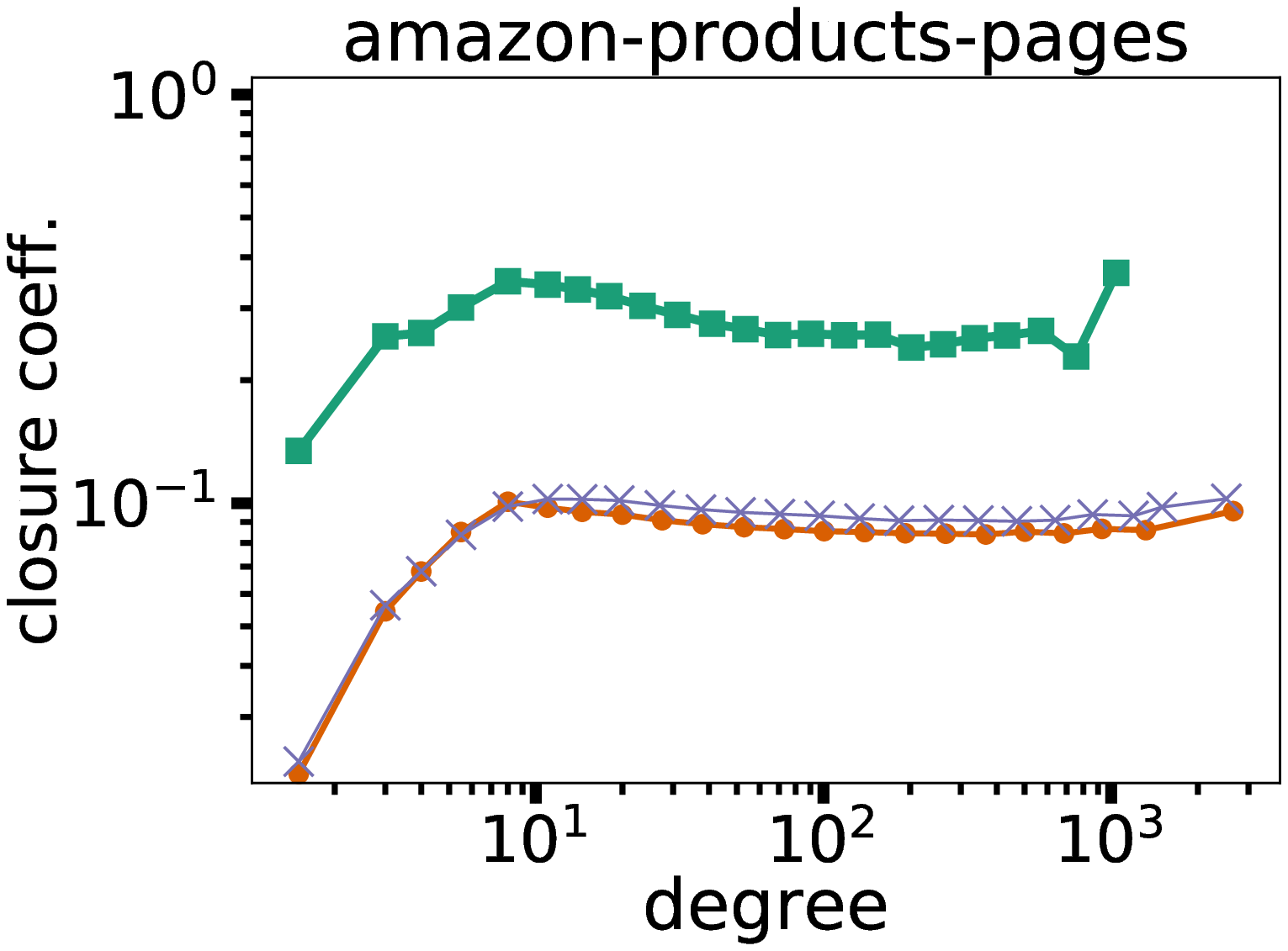}
\includegraphics[width=0.47\textwidth]{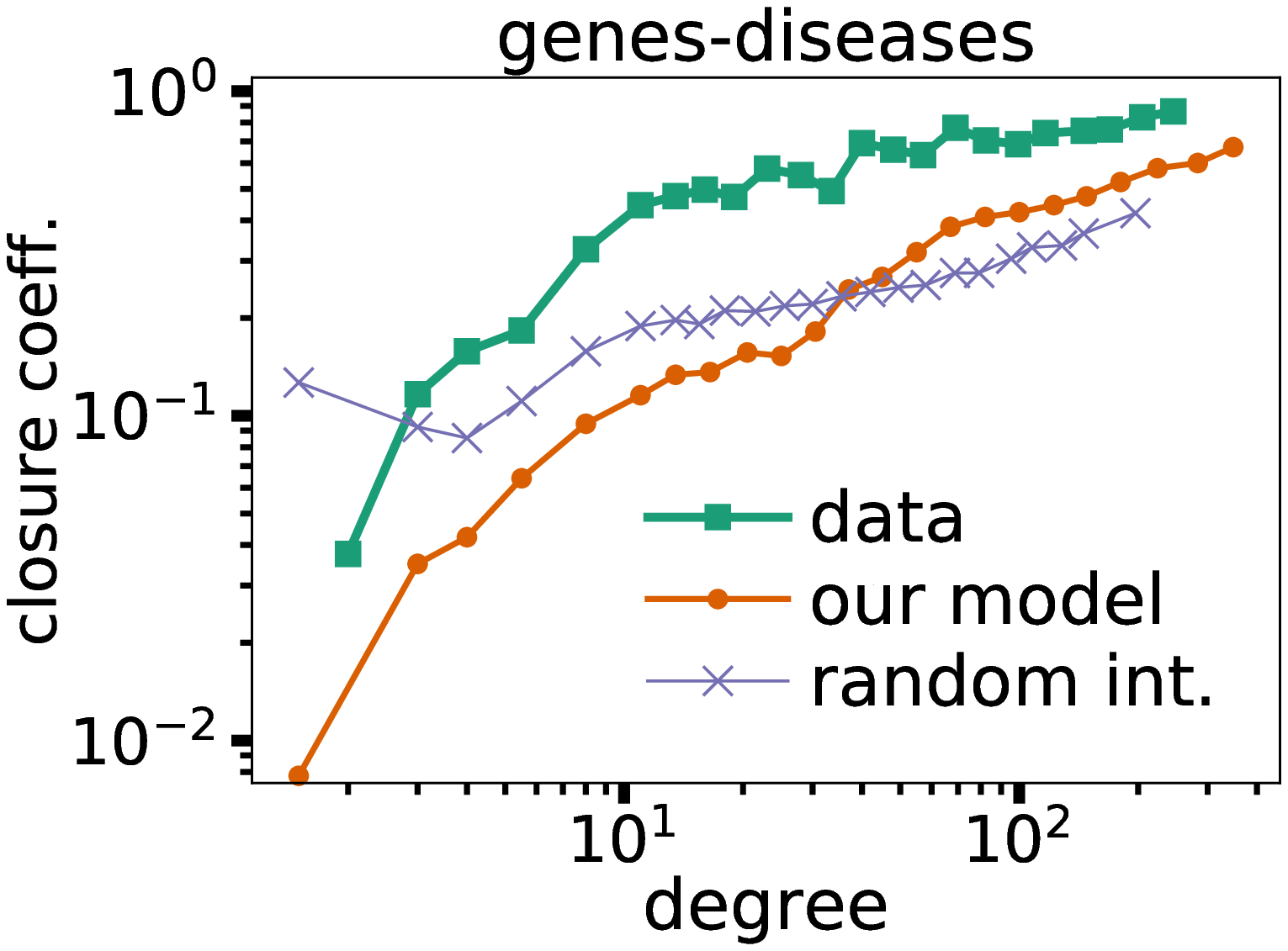}
\end{center}
\caption{Local closure coefficient as a function of degree on the \emph{walmart-items-trips} (left) and \emph{genes-diseases} (right) datasets. The \textcolor{mygreen}{green}, \textcolor{myorange}{orange}, and \textcolor{myblue}{blue} lines represent the clustering coefficients from the real projected graph, the projected graph produced by our model, and the projected graph produced by the random intersection model respectively. \label{fig:clo-vs-degree}}
\end{figure}

In some datasets, the global clustering coefficient is essentially the same as in our model (\emph{classes-drugs}, \emph{walmart-items-trips}).
However, there are several cases where our model and the random intersection model have a factor of two less global clustering
(\emph{actors-movies}, \emph{amazon-products-pages}, \emph{diseases-genes}).
This suggests that there is global transitivity in these networks that goes beyond what we would expect from a random projection.
Overall, the relative difference between the data and the model is larger for the global clustering coefficient than for the local clustering coefficient.
We emphasize that our model is not designed to match these empirical properties.
Instead, we are interested in how much clustering one can expect from a model
that only accounts for the bipartite degree distributions and the projection step.

Finally, the random graphs have non-trivial mean closure coefficients, but they tend to be smaller compared to the data,
with the exception of \emph{mathsx-tags-questions}.
Similar to the local clustering coefficient, we plot the local closure coefficient as a function of degree
for two datasets (\emph{amazon-products-pages} and \emph{genes-diseases}; \cref{fig:clo-vs-degree}).
For \emph{amazon-products-pages}, we see the flat closure coefficient as one might expect from \cref{thm:closure},
although the data has more closure at baseline.
This is likely explained by the fact that two products tend to appear on the same pages, reducing the number
of length-2 paths in the data, whereas bipartite connections are made at random in the model.
With the \emph{genes-diseases} dataset, the random models capture an increase in closure
as a function of degree that is also seen in the data.
In this case, the model parameters do not satisfy the assumptions of \cref{thm:closure}, but the general empirical
behavior is still seen in our random projection model.


\section{Conclusion}

We have analyzed a simple bipartite ``Chung-Lu style" model that captures
some common properties of real-world networks.
The simplicity of our model enables theoretical analysis of properties of the projected graph, giving analytical formulae for
graph statistics such as clustering coefficients, closure coefficients, and the
expected degree distribution. We also pair our model with a fast optimal graph
generation algorithm, which is provably optimal for certain input
distributions. Empirically, we find that a substantial amount of clustering and closure behavior
in real-world networks is explained by sampling from our model with the same bipartite degree
distribution. However, global clustering is often larger than predicted by the projection model.

\section*{Acknowledgments}
This research was supported by
NSF Award DMS-1830274,
ARO Award W911NF19-1-0057,
ARO MURI,
and JPMorgan Chase \& Co.
We thank Johan Ugander for pointing us to the literature on random intersection graphs.


\bibliographystyle{siamplain}
\bibliography{references}

\appendix


\renewcommand{\thesubsection}{\Alph{subsection}}

\section{Connection between conditional probability and empirical clustering}
\label{sec:conditional-to-standard}

Here, we show that the conditional probability formulation for clustering is exactly a weighted average of the standard empirical clustering coefficient 
for the power-law type distributions our model explores. We demonstrate this below for the local clustering coefficient. The case for local closure is similar.

Fix a node $u$ and suppose we generate a graph $G_i$ under our random graph model. Let $W_i$ and $T_i$ be the number of wedges and triangles at node $u$ in the projected graph $G_i$. The empirical clustering coefficient $\tilde{C}_i(u)$ is equal to $T_i / W_i$. Weighting each sample $\tilde{C}_i$ by $W_i$, the weighted clustering coefficient is $\frac{\sum_{i=1}^s W_i \tilde{C}_i(u)}{\sum_{i=1}^s W_i} = \frac{\frac{1}{s} \sum_{i=1}^s T_i}{\frac{1}{s}\sum_{i=1}^s W_i}$. As the number of samples $s$ (i.e. the size of the graph) approaches infinity, both the numerator and denominator approaches their expectations since each sample is independent. Computing this expectation, we see that it is exactly the value of $C(u)$ computed in \Cref{thm:lccf_wt}. 

In the case of the global closure coefficient, a similar argument shows that we actually have equality between the conditional and non-conditional definition (in the limit that the size of the graph goes to infinity).  

\section{Power-law statistics in real-world bipartite networks}
\label{sec:power-law-stats}
In many of our datasets, we find that power-law degree distributions are a reasonable
approximation for the left and right sides of the bipartite network (\cref{tbl:power-law}).

\begin{table}[h]
\caption{Estimated power law (PL) exponents of the left and right degree distributions in
the bipartite graph datasets in \cref{tbl:clustering_experiments}
(an exponent of $\alpha$ corresponds to a distribution decay $\propto k^{-\alpha}$). 
Parameters were fit using the {\tt powerlaw} pyton package~\cite{alstott2014powerlaw}.
We also report the Kolmogorov-Smirnov statistic $D$ between the fit model and the data.
}
\label{tbl:power-law}
\begin{center}
\begin{tabular}{r c c c c}
\toprule
dataset                & left PL exponent                              & D     & right PL exponent                             & D     \\ \midrule
actors-movies          & 1.862 $\pm$ 0.002 & 0.025  & 5.066 $\pm$ 0.080   & 0.019 \\
amazon-products-pages  & 3.426 $\pm$ 0.028 & 0.009 & 1.530 $\pm$ 0.001 & 0.310 \\
classes-drugs          & 2.179 $\pm$ 0.088  & 0.055 & 2.528 $\pm$ 0.007 & 0.060 \\
condmat-authors-papers & 3.495 $\pm$ 0.085   & 0.039 & 7.739 $\pm$ 0.829    & 0.017 \\
directors-boards       & 4.799 $\pm$ 0.400    & 0.055 & 5.390 $\pm$ 1.007    & 0.085 \\
diseases-genes         & 3.105 $\pm$ 0.190   & 0.043 & 3.120 $\pm$ 0.138   & 0.053 \\
genes-diseases         & 3.120 $\pm$ 0.138   & 0.053 & 3.105 $\pm$ 0.190   & 0.043 \\
mathsx-tags-questions  & 1.835 $\pm$ 0.048  & 0.048 & 5.909 $\pm$ 0.015  & 0.012 \\
mo-questions-users     & 2.842 $\pm$ 0.061  & 0.017 & 1.642 $\pm$ 0.008 & 0.031 \\
so-users-threads       & 2.407 $\pm$ 0.009  & 0.011 & 7.263 $\pm$ 0.149    & 0.015 \\
walmart-items-trips    & 2.586 $\pm$ 0.039   & 0.014 & 2.217 $\pm$ 0.005   & 0.108  \\
\bottomrule
\end{tabular}
\end{center}
\end{table}

\section{Additional proofs}\label{sec:proofs}
\subsection*{Proof of \Cref{lem:wedge_prob}}
Let $A_i$ denote the event that $(u, u_i) \in E$ for $i = 1, 2$.
We want to compute the probability of $A_1 \cap A_2$. We first decompose the probability as follows:
\begin{equation}   \label{Eq:wedge_prob_proj}
\begin{array}{rcl}
\prob{A_1 \cap A_2}
  = \prob{A_1} + \prob{A_2} - \prob{A_1 \cup A_2 }
  = \prob{A_1} + \prob{A_2} + \prob{\bar{A}_1 \cap \bar{A}_2} - 1. 
\end{array}
\end{equation}

The probability that events $A_i$ occur is given by \Cref{thm:density_proj}, so we compute the probability of $\bar{A}_1 \cap \bar{A}_2$, which is the
event that $u$ is not connected to either $u_1$ or $u_2$ in the projected graph.
This happens if and only if, in the bipartite graph,
for every $v \in R$, we have that
(i) $u$ is not connected to $v$, or
(ii) both $u_1$ and $u_2$ are not connected to $v$.
For now, let $v$ be a fixed node on the right.
Conditioning on $w_v$ and using the fact that edge formations in the bipartite graph are independent, the probability is
\begin{eqnarray*} \textstyle
1 - \frac{w_u w_v}{n_R M_{R1}} + \frac{w_u w_v}{n_R M_{R1}} \left(1 - \frac{w_{u_1} w_v}{n_R M_{R1}} \right) \left(1 - \frac{w_{u_2} w_v}{n_R M_{R1}} \right)
=1 -\frac{w_u(w_{u_1}+w_{u_2})w_v ^2}{n_R ^2 M_{R1}^2}+ \frac{w_u w_{u_1}w_{u_2}w_v ^3}{n_R ^3 M_{R1}^3}.
\end{eqnarray*}
Therefore, we have
\begin{eqnarray*}
\log(\prob{\bar A_1 \cap \bar A_2}) 
&=& \textstyle \sum_{v \in R}\log\left(1 -\frac{w_u(w_{u_1}+w_{u_2})w_v ^2}{n_R ^2 M_{R1}^2}+ \frac{w_u w_{u_1}w_{u_2}w_v ^3}{n_R ^3 M_{R1}^3} \right)\\
&=& \textstyle \sum_{v \in R} \left[-\frac{w_u(w_{u_1}+w_{u_2})w_v ^2}{n_R ^2 M_{R1}^2}+ \frac{w_u w_{u_1}w_{u_2}w_v ^3}{n_R ^3 M_{R1}^3} -\frac{w_u^2(w_{u_1}+w_{u_2})^2w_v ^4}{2n_R ^4 M_{R1}^4} \cdot (1+O(n_R^{-2\delta})) \right]\\
&=& \textstyle -p_{u u_1} - p_{u u_2} + \frac{M_{R1}M_{R3}}{M_{R2} ^2}  p_{u u_1} p_{u u_2} \frac{1}{w_u} - \frac{M_{R4}}{2n_R M_{R2}^2}(p_{u u_1} + p_{u u_2})^2 \cdot (1+O(n_R^{-2\delta})).
\end{eqnarray*}
Consequently, 
\begin{eqnarray*}
\prob{\bar A_1 \cap \bar A_2}
&=&  \textstyle 1 - p_{u u_1} - p_{u u_2} + \frac{p_{u u_1}^2}{2} + \frac{p_{u u_2}^2}{2} + p_{u u_1}p_{u u_2}  - \frac{p_{u u_1}^3 + p_{u u_2}^3}{6} (1+O(n_R^{-2\delta})) + o(p_{u u_1} p_{u u_2})
\\ & & \textstyle \quad + \frac{M_{R1}M_{R3}}{M_{R2} ^2}  p_{u u_1} p_{u u_2}\frac{1}{w_u} \cdot (1+O(n^{-2\delta})) - \frac{M_{R4}}{2n_R M_{R2}^2}(p_{u u_1}^2 + p_{u u_2}^2)(1+O(n_R^{-2\delta})).
\end{eqnarray*}
Combining everything, the probability of wedge formation is
\begin{eqnarray*} 
 \prob{A_1 \cap A_2} &=& \left(\frac{M_{R1}M_{R3}}{M_{R2} ^2} \cdot \frac{1}{w_u} + 1\right) p_{u u_1} p_{u u_2} \cdot (1 + O(n_R^{-2\delta}) + o(1))\\
 & & + \left(\frac{p_{u u_1} + p_{u u_2}}{6} + \frac{M_{R4}}{2n_R M_{R2}^2} \right) (p_{u u_1}^2 + p_{u u_2}^2) \cdot O(n_R^{-2\delta})\\
&=& \left(\frac{M_{R1}M_{R3}}{M_{R2} ^2} \cdot \frac{1}{w_u} + 1\right) p_{u u_1} p_{u u_2} \cdot \left(1 + O(n_R^{-2\delta}) + o(1) + \left(\frac{w_{u_1}}{w_{u_2}} + \frac{w_{u_2}}{w_{u_1}}\right) O(n_R^{-4\delta}) \right)
\\&=& \left(\frac{M_{R1}M_{R3}}{M_{R2} ^2} \cdot \frac{1}{w_u} + 1\right) p_{u u_1} p_{u u_2} \cdot \left(1 + O(n_R^{-2\delta}) + o(1) + O(n_R ^{1/2 - 5\delta}) \right)
\end{eqnarray*}
where the last equality is due to $w_{u_i} \in [1, n^{1/2 - \delta}]$ and their ratio is bounded by $n^{1/2 - \delta}$.
Since $\delta > 1/10$, the proof is complete.

\subsection*{Proof of \Cref{lem:triangle_prob}}
For nodes $u, u_1, u_2$ to form a triangle, one of two cases must happen. The
first is the case that all three nodes connect to a same node in the right
partition. If the first case does not happen, then each pair $(u,u_1)$,
$(u,u_2)$, $(u_1,u_2)$ have a different common neighbor in the bipartite graph,
forming a length-6 cycles.  Now we analyze these two cases separately.

In the first case, there exists a node $v \in R$ such that the three nodes $u, u_1, u_2$ 
are connected to $v$. For any specific node $v \in R$, the probability is 
$\frac{w_u w_{u_1} w_{u_2}}{n_R ^3 M_{R1}^3} \cdot w_v^3$, and thus 
\begin{eqnarray*}
& \prob{\exists v \in R \textit{ s.t. } (u, v), (u_1, v), (u_2, v) \in E_b}
  = 1 - \prod_{v \in R} \left(1 - \frac{w_u w_{u_1} w_{u_2}}{n_R ^3 M_{R1}^3} \cdot w_v^3\right) \\
  &= 1 - \exp\left(-\frac{w_u w_{u_1} w_{u_2}}{n_R ^3 M_{R1}^3} \cdot \sum_{v \in R} w_v ^3 \cdot (1 + O(n_R^{-6\delta}))  \right)
  = p_{u u_1} p_{u u_2} \cdot \frac{M_{R1}M_{R3}}{M_{R2}^2}\cdot \frac{1}{w_u} \cdot (1 + O(n_R^{-3\delta})).
\end{eqnarray*}

In the second case, $u, u_1, u_2$ are pairwise connected through a different node on the right separately, forming a 6-cycle. For any node triple $v_1, v_2, v_3$, the probability is
\begin{eqnarray*}
  \prob{\text{$(u, v_1, u_1, v_2, u_2, v_3)$ forms a 6-cycle}}
 &=& \frac{w_u^2 w_{u_1}^2 w_{u_2}^2}{n_R ^6 M_{R1} ^6} \cdot w_{v_1}^2 w_{v_2}^2 w_{v_3}^2.
\end{eqnarray*}
Therefore, the total probability of the second case is
\begin{align*}
&  \prob{\exists \text{ a 6-cycle containing $u, u_1, u_2$}} \leq \sum_{\substack{v_1, v_2, v_3 \in R \\ v_1 \neq v_2 \neq v_3}} \frac{w_u^2 w_{u_1}^2 w_{u_2}^2}{n_R ^6 M_{R1} ^6} \cdot w_{v_1}^2 w_{v_2}^2 w_{v_3}^2 \\
& \leq \frac{w_u^2 w_{u_1}^2 w_{u_2}^2}{n_R ^3 M_{R1} ^6} \cdot 
\frac{\sum_{v_1 \in R} w_{v_1}^2}{n_R} \frac{\sum_{v_2 \in R} w_{v_2}^2}{n_R} \frac{\sum_{v_3 \in R} w_{v_3}^2}{n_R}
= p_{u u_1} p_{u u_2} p_{u_1 u_2} = o(p_{u u_1} p_{u u_2}).
\end{align*}
Combining the two cases completes the proof.
\hide{
Comparing the probabilities of the two cases, we have
\[
\frac{\prob{\text{case 2}}}{ \prob{\text{case 1}}} 
\leq \frac{w_u w_{u_1}w_{u_2}}{n_R} \cdot \frac{M_{R2}^3}{M_{R1}^3M_{R3}}
 = O(n_R ^{1/2 - 3\delta}) = o(1)
\]
where the last inequality is due to $\delta > 1/6$.
Therefore, the probability of second case is negligible compared to the first case,
and thus
\[
\prob{(u, u_1) , (u, u_2) , (u_1, u_2) \in E \mid S_L, S_R}
= p_{u u_1} p_{u u_2} \cdot \frac{M_{R1}M_{R3}}{M_{R2}^2}\cdot \frac{1}{w_u} \cdot (1 + o(1)).
\]
}

\end{document}